\documentclass[11pt]{article}

\usepackage{amsmath,amsfonts,amssymb,amsthm}
\usepackage[margin=1in]{geometry}
\usepackage[colorlinks]{hyperref}
\usepackage{enumitem}
\usepackage{mathtools}
\usepackage{enumitem}
\usepackage{algorithm, caption}
\usepackage[noend]{algpseudocode}
\usepackage{subfigure}
\usepackage{graphicx}
\usepackage{tabularx,environ}
\usepackage[braket]{qcircuit}
\usepackage[dvipsnames]{xcolor}
\usepackage{multirow}
\usepackage{tikz}
\usepackage{thm-restate}
\usepackage[capitalize]{cleveref}
\usepackage{dsfont}
\usepackage{enumitem}
\usepackage{comment}

\makeatletter
\newcommand*\bigcdot{\mathpalette\bigcdot@{.5}}
\newcommand*\bigcdot@[2]{\mathbin{\vcenter{\hbox{\scalebox{#2}{$\m@th#1\bullet$}}}}}
\makeatother

    \hypersetup{
    	colorlinks,
    	linkcolor={red!100!black},
    	citecolor={blue!100!black},
        urlcolor = {blue!60!black}
    }
\makeatletter
\newcommand\notsotiny{\@setfontsize\notsotiny\@vipt\@viipt}
\makeatother

\newcommand{\C}{{\mathbb{C}}}

\newcommand{\norm}[1]{\left\|{#1}\right\|}

\def\01{\{0,1\}}

\newcommand{\eps}{\varepsilon}
\newcommand{\ketbra}[2]{|#1\rangle\langle#2|}

\theoremstyle{plain}

\newtheorem{theorem}{Theorem}

\newtheorem{lemma}[theorem]{Lemma}

\newtheorem{fact}[theorem]{Fact}
\newtheorem{result}[theorem]{Result}


\definecolor{applegreen}{rgb}{0.0, 0.5, 0.0}

\renewcommand{\Pr}{\mbox{\rm Pr}}

\DeclareMathOperator{\op}{op} 

\newcommand{\N}{\mathbb{N}} 

\newcommand{\poly}{\mbox{\rm poly}}

\DeclareMathOperator{\Tr}{Tr}
\DeclareMathOperator{\tr}{tr}

\newcommand{\beq}{\begin{equation}}
\newcommand{\eeq}{\end{equation}}
\newcommand{\beqn}{\begin{equation*}}
\newcommand{\eeqn}{\end{equation*}}
\newcommand{\beqr}{\begin{eqnarray}}
\newcommand{\eeqr}{\end{eqnarray}}
\newcommand{\beqrn}{\begin{eqnarray*}}
\newcommand{\eeqrn}{\end{eqnarray*}}
\newcommand{\bmline}{\begin{multline}}
\newcommand{\emline}{\end{multline}}
\newcommand{\bmlinen}{\begin{multline*}}
\newcommand{\emlinen}{\end{multline*}}

\theoremstyle{plain}

\theoremstyle{definition}

\theoremstyle{remark}

\renewenvironment{proof}[1][]{
    \begin{trivlist}
    \item[\hspace{\labelsep}{\em\noindent Proof#1:\/}]}
    {{\hfill$\Box$}
    \end{trivlist}
}

\newtheoremstyle{named}{}{}{\itshape}{}{\bfseries}{.}{.5em}{\thmnote{#3}}
\theoremstyle{named}

\begin{document}

\title{\textbf{Certifying and learning local quantum Hamiltonians}}

\author{Andreas Bluhm\thanks{Univ. Grenoble Alpes, CNRS, Grenoble INP, LIG}\\{\small \href{mailto:Andreas.Bluhm@univ-grenoble-alpes.fr}{\texttt{Andreas.Bluhm@univ-grenoble-alpes.fr}}} \and 
Matthias C. Caro\thanks{University of Warwick}\\{\small \href{mailto:matthias.caro@warwick.ac.uk}{\texttt{matthias.caro@warwick.ac.uk}}} \vspace{0.8em}
\and
Francisco Escudero Gutiérrez\thanks{Centrum Wiskunde \& Informatica, QuSoft}\\{\small \href{mailto:Francisco.Escudero.Gutierrez@cwi.nl}{\texttt{Francisco.Escudero.Gutierrez@cwi.nl}}} \and
Junseo Lee\thanks{Institute of Computer Technology, Seoul National University}\\{\small \href{mailto:junseolee@fas.harvard.edu}{\texttt{junseolee@fas.harvard.edu}}}\vspace{0.8em}
\and 
Aadil Oufkir\thanks{University Mohammed VI Polytechnic}\\{\small \href{mailto:aadil.oufkir@gmail.com}{\texttt{aadil.oufkir@gmail.com}}} \and
Cambyse Rouzé\thanks{Inria, Télécom Paris - LTCI Institut Polytechnique de Paris}\\{\small \href{mailto:cambyse.rouze@telecom-paris.fr}{\texttt{cambyse.rouze@telecom-paris.fr}}} \and
Myeongjin Shin\thanks{Korea Advanced Institute of Science and Technology}\\{\small \href{mailto:hanwoolmj@kaist.ac.kr}{\texttt{hanwoolmj@kaist.ac.kr}}}}

\date{\today}

\maketitle

\begin{abstract}
In this work, we study the problems of certifying and learning quantum $k$-local Hamiltonians, for a constant $k$. Our main contributions are as follows:\\

\textbf{Certification of Hamiltonians.} We show that certifying a local Hamiltonian in normalized Frobenius norm via access to its time-evolution operator can be achieved with only $O(1/\varepsilon)$ evolution time. This is optimal, as it matches the Heisenberg-scaling lower bound of $\Omega(1/\varepsilon)$. To our knowledge, this is the first optimal algorithm for testing a Hamiltonian property. A key ingredient in our analysis is the Bonami Hypercontractivity Lemma from Fourier analysis.\\

\textbf{Learning Gibbs states.} We design an algorithm for learning Gibbs states of local Hamiltonians in trace norm that is sample-efficient in all relevant parameters. In contrast, previous approaches learned the underlying Hamiltonian (which implies learning the Gibbs state), and thus inevitably suffered from exponential sample complexity scaling in the inverse temperature.\\

\textbf{Certification of Gibbs states.} We give an algorithm for certifying Gibbs states of local Hamiltonians in trace norm that is both sample and time-efficient in all relevant parameters, thereby solving a question posed by Anshu (Harvard Data Science Review, 2022).
\end{abstract}

\newpage
\section{Introduction}
To make good on the promise of quantum computers and quantum simulators to solve problems intractable for classical computers, we will require quantum hardware composed of thousands of qubits \cite{bravyi2024high,liu2025robust}. However, the ultimate performance of this hardware will depend on the accurate implementation of carefully engineered dynamics, both to realize the desired computation and to mitigate errors. Therefore, the ability to precisely characterize the dynamics of quantum systems involving many qubits becomes crucial. This problem is known as Hamiltonian learning and has recently received a lot of attention due to its fundamental importance for quantum technologies \cite{Silva2011Practical,holzapfel2015scalable,Zubida2021Optimal,haah2022optimal,yu2023robust,Dutkiewicz.2023,huang2023heisenberg,li2023heisenberglimited,möbus2023dissipationenabled,franca2024efficient,Gu2022Practical,zhao2025learning,hu2025ansatz,anshu2021sample,rouze2023learning,onorati2023efficient,bakshi2023learning,ma2024learning,arunachalam2025testing,caro2023learning,castaneda2023hamiltonian}. More recently, Hamiltonian testing \cite{aharonov2022quantum,laborde2022quantum,she2022unitary,bluhm2024hamiltonian,arunachalam2025testing,kallaugher2025hamiltonian,gao2025quantum,sinha2025improvedhamiltonianlearningsparsity} has emerged as a potentially more efficient way to ensure that a quantum device works as desired.

    In fact, one of the main motivations for Hamiltonian learning comes from a property testing task: Given a quantum device of which a provider claims that it behaves according to a given Hamiltonian, how can we efficiently verify that behavior? 
    Of course, one may learn the Hamiltonian governing the system and then check if it coincides with the target Hamiltonian. However, full Hamiltonian learning is in general unnecessarily resource-intensive if we ``just'' want to certify that the Hamiltonian is indeed the one claimed by the provider. 
    This motivates a shift in focus from Hamiltonian learning to \emph{quantum Hamiltonian certification}, which, surprisingly, was only explored before in \cite{gao2025quantum}. 

    In \cite{gao2025quantum}, Gao, Ji, Wang, Yu, and Zhao proposed an efficient, but not provably optimal, certification protocol for sparse Hamiltonians by accessing the Hamiltonian via its dynamics. In contrast, in this work we focus on finding an optimal protocol for one of the most physically relevant classes of quantum Hamiltonians on $n$ quantum particles: the family of $k$-local Hamiltonians. We will think of $k$ as a constant, but we will make the dependence on $k$ explicit in our theorem statements. These Hamiltonians are those that can be written as a sum of Hamiltonian terms that act non-trivially on at most $k$ particles. Such $k$-local interactions model many relevant quantum systems, as they, for example, include the case of geometrically local Hamiltonians, where particles only interact with their nearest neighbors, and other models such as the SYK model \cite{sachdev1993gapless,Kitaev:2015a}.  Moreover, such Hamiltonians with $k$-particle interactions have played a prominent role in quantum complexity theory \cite{oliveira2005complexity, kempe2006complexity, bravyi2017complexity}, with the corresponding $k$-local Hamiltonian problem proven to be QMA-complete; and they are known to be universal for quantum simulation \cite{cubitt2016complexity, cubitt2018universal}. Finally, the special cases of $2$-local classical and quantum Hamiltonians have been extensively studied, since, despite their apparent simplicity, they exhibit non-trivial behavior. For the classical case, see \cite{mccoy2012importance,daskalakis2019testing} on the relevance of Ising models. For the quantum case, 2-local Hamiltonians exhibit non-trivial quantum phase transitions \cite{dutta2010quantum,suzuki2012quantum}, and include several physically relevant models \cite{lipkin1965validity,sachdev1993gapless,hormozi2017nonstoquastic}. Thus, the main motivating question of this work is: 
    \begin{quote}
    \center \emph{What is the complexity of certifying local Hamiltonians from dynamics?}
    \end{quote}\vspace{0.4cm}

    We answer this question by showing that $O(1/\eps)$ evolution time is sufficient for distinguishing whether an unknown Hamiltonian is equal to some desired target Hamiltonian or $\eps$-far from it. This is optimal, due to known lower bounds, and achieves the gold standard of Heisenberg-scaling, i.e. the complexity dependence on $\eps$ scales as $1/\eps$. Furthermore, to the best of our knowledge, this is the first provably optimal result in Hamiltonian property testing. Additionally, we show that our $k$-local Hamiltonian certification algorithm can be made tolerant. 
    
    In addition, we also explore the case when the Hamiltonian is accessed via its Gibbs, or thermal, state $\rho(\beta)=e^{-\beta H}/\Tr[e^{-\beta H}]$, where $\beta$ is the inverse temperature. In this setting, Anshu asked whether certification is possible in an efficient manner \cite{anshu2022some}. However, to the best of our knowledge, not even the more studied problem of learning Gibbs states of local Hamiltonians has an algorithm that is efficient in all relevant parameters \cite{anshu2021sample, haah2022optimal, rouze2023learning, onorati2023efficient, bakshi2023learning, Gu2022Practical,chen2025learning}. That is because previous approaches learn the Hamiltonian (which implies learning the Gibbs state, as noted in \cite[Remark 18]{anshu2021sample}), but due to known lower bounds, this requires an exponential dependence on $\beta$~\cite[Theorem 1.2]{haah2022optimal}. 

    For learning thermal states of local Hamiltonians, we give, to our knowledge, the first algorithm that is sample-efficient in all relevant parameters. Furthermore, for certifying such states, we give the first algorithm that is both sample- and time-efficient in all relevant parameters, resolving the question by Anshu \cite{anshu2022some}.

    \subsection{Results and Technical Overview}
    We will consider $n$-qubit Hamiltonians $H$, and we assume that they are $k$-local, for a constant $k\in \N$. As such, their expansion in the Pauli basis is simply
    \begin{equation*}
        H=\sum_{P\in \{I,X,Y,Z\}^{\otimes n}:\, |P|\leq k}h_P P\, ,
    \end{equation*}
    where $|P|$ is the number of sites where the Pauli string differs from identity. As two Hamiltonians that only differ in a multiple of the identity induce the same dynamics and thermal states, we assume without loss of generality that $h_{I^{\otimes n}}=\Tr[H]/2^n=0.$
    
    \subsubsection{Certification via access to the dynamics}
    If a quantum system is governed by a Hamiltonian $H$, then, according to the Schr\"odinger equation, its dynamics are determined by the unitary time evolution operator $U_H(t)=e^{-itH}$. By this, we mean that if the (mixed) state describing the system at time $0$ is $\rho,$ at time $t$ the state will have evolved to $U_H(t)\rho U_H^\dagger (t)$. Thus, a natural access model for Hamiltonians, introduced in \cite{huang2023heisenberg}, is to perform \emph{experiments} of the following kind: prepare a state $\rho,$ apply $U_H(t_1)$---that is, make a query to $U_H(t_1)$, which in a lab can be implemented by letting the system evolve for time $t_1$---, apply a unitary operator $V_1$ independent of $H$, query $U_H(t_2)$, apply a unitary operator $V_2$ independent of $H$, query $U_H(t_3)$, $\ldots$, and finally measure. There are several figures of merit to be optimized when performing a computational task in this access model. The one commonly considered the most important is the \emph{total evolution time}, which is the sum of all the times $t_i$ at which the algorithm queries $U_H(t_i).$ 

    As in prior work on Hamiltonian property testing from the dynamics \cite{she2022unitary,bluhm2024hamiltonian,arunachalam2025testing,kallaugher2025hamiltonian,gao2025quantum}, we will assume that the Hamiltonians have bounded operator norm,\footnote{We want to stress that our protocol assumes that the operator norm of the Hamiltonian is upper bounded, but that the total evolution time of our protocol is independent of such a bound. To be precise, our algorithm only needs to know that upper bound to be precise enough in a Trotterization step.}
    and we will consider the normalized Frobenius norm, given by $$\norm{H-H'}_{F}=\sqrt{\Tr[(H-H')^2]/2^n},$$ as the distance in the space of Hamiltonians. 
    The normalized Frobenius norm induces an average-case distance: if the normalized Frobenius norm between two Hamiltonians is small, then the expected values of observables measured on the two states generated by applying the time evolution of each Hamiltonian to a Haar-random state will be close (see \cite[Appendix A]{bluhm2024hamiltonian} and \cite[Section 7.2]{ma2024learning}). Such problems, where the object to be learned or tested is bounded in a worst-case norm, but the precision of the approximation is taken to be an average case norm, has also extensively been studied in the classical case, as for instance in the case of PAC learning of low-degree functions $f:\{0,1\}^n\to [-1,1]$ \cite{linial1993constant,eskenazis2022learning}. 
    
    Now, we are ready to state our first result (see \cref{thm:klocalcertificationdiagonal} for a formal and more detailed statement) on certifying $k$-local Hamiltonians from time evolution access.

    \begin{result}\label{result:CertificationTimeEvolution}
        Let $k$ be a constant. Let $H$ and $H_0$ be two $n$-qubit $k$-local Hamiltonians, where $H_0$ is known in advance, whereas $H$ is unknown. Then, there is an algorithm with access to a description of $H_0$ and to the time evolution of $H$ that only uses $ O(1/\eps)$ total evolution time, and, with high success probability, determines whether $\norm{H-H_0}_{F} \le \frac{\eps}{8\cdot 3^k}$ or $\norm{H-H_0}_{F}\geq \eps.$
    \end{result}

    \cref{result:CertificationTimeEvolution} is optimal up to constant factors, because $\Omega(1/\eps)$ evolution time is required to distinguish $H=\eps X$ from $H=-\eps X$ (see, for instance, \cite{kallaugher2025hamiltonian}). Several previous works considered testing Hamiltonian properties from time evolution access \cite{bluhm2024hamiltonian,arunachalam2025testing,kallaugher2025hamiltonian,gao2025quantum}, but the closest-to-optimal result for any Hamiltonian property testing task up to now was the $O(1/\eps^2)$ time evolution upper bound for testing locality given in \cite{kallaugher2025hamiltonian}, which is quadratically worse than the best lower bound $\Omega(1/\eps)$. 
    Thus, to our knowledge, Result~\ref{result:CertificationTimeEvolution} is the first optimal algorithm for testing a property of quantum Hamiltonians. Furthermore, we substantially improve upon the $\widetilde O(n^{3k/2}/\eps)$-evolution-time algorithm for certifying $k$-local Hamiltonians that can be obtained as a special case of the $\widetilde O(s^{3/2}/\eps)$-evolution-time algorithm for certifying Hamiltonians supported on at most $s$ Pauli operators given in \cite[Theorem 5.5]{gao2025quantum}. In a concurrent work, a $O(1/\eps^2)$ evolution-time algorithm is given for the case where $H$ is an arbitrary Hamiltonian and $H_0=0$ \cite{sinha2025improvedhamiltonianlearningsparsity}. Compared to this, our result constitutes a quadratic improvement when $H$ is promised to be a local Hamiltonian.

    \paragraph{The intolerant algorithm.} For simplicity, we will give a technical overview of the intolerant certification algorithm in this section, determining whether $H=H_0$ (EQUAL) or $\|H-H_0\|_F \ge \eps$ (FAR). We provide a detailed explanation of the tolerant algorithm in~\cref{sec:tolerant-cert}. Our algorithm relies on a novel application of the Bonami Lemma, also known as Hypercontractivity Theorem, for operators~\cite{bonami1970etude,montanaro2008quantum}. To later apply the Bonami Lemma, let us express $\Delta H$ as
    \begin{equation*}
        \Delta H = \sum_{s\in\{0,1\}^n} \lambda_s \ket{\psi_s}\bra{\psi_s},
    \end{equation*}
    where $\lambda_s$ are the eigenvalues and are the $\ket{\psi_s}$ are eigenvectors of $\Delta H$.
    
    We can simulate query access to the time evolution operator of $\Delta H$ thanks to Trotterization, which allows us to approximate $e^{-it\Delta H}$ up to arbitrarily small error by making queries to $e^{-itH}$ and $e^{-itH_0}$ \cite{childs2021theory}. In particular, we may perform Bell sampling on $e^{-i\Delta Ht}$. That is, we may sample from the distribution formed by the squares of the Pauli coefficients of $e^{-i\Delta Ht}$ by applying $e^{-i\Delta Ht}\otimes I^{\otimes n}$ on $n$ EPR pairs and measuring in the Bell basis. Then the probability of measuring identity is given by \cite[Theorem 1]{sinha2025improvedhamiltonianlearningsparsity}\footnote{For simplicity, we will use Bell sampling in the introduction, but, as shown in Lemma~\ref{lem:memorylessPaulisampling}, we can also estimate $I(t)$ without quantum memory, and thus we can also perform our certification algorithm without quantum memory.}
    \begin{equation*}
        I(t) = \frac{1}{4^n}\sum_{r,s\in \{0,1\}^n}\cos((\lambda_r-\lambda_s)t).
    \end{equation*} Note that $I(t)$ only depends on eigenvalue differences of $\Delta H$. In particular, to analyze $I(t)$
    , we consider the proportion of $\varepsilon$-separated eigenvalue pairs:
    \begin{equation*}
        \Lambda(\Delta H, \varepsilon) = \frac{1}{4^n} \sum_{r,s:|\lambda_r-\lambda_s| \ge \varepsilon} 1.
    \end{equation*}
    We first show that for a uniformly random $t\in [0,2/\eps]$,
    \begin{equation}\label{eq:Itupperbound}
        I(t)\leq 1-\frac{\Lambda}{4}
    \end{equation}
    with probability $\geq 1/3$ (see \cref{lem:spectral-condition}).
    
    Next, we show how to lower bound $\Lambda$ (and thus how to upper bound $I(t)$ thanks to \cref{eq:Itupperbound}) for $k$-local Hamiltonians. We can express the second and fourth moment of $F(r,s)=\lambda_r-\lambda_s$ in terms of $2$-norm and $4$-norm of $\Delta H$ as follows:
    \begin{align*}
        \frac{1}{4^n}\sum_{s,t}[|F(s,t)|^2] &= 2\|\Delta H\|_2^2 = 2\|\Delta H\|_F^2, \\
        \frac{1}{4^n}\sum_{s,t}[|F(s,t)|^4] &= 2\|\Delta H\|_4^4+6\|\Delta H\|_2^4, 
    \end{align*}
    where we use the convention $\norm{\Delta H}_p=(\tfrac{1}{2^n}\sum_{s}|\lambda_s|^p)^{1/p}$ (with this notation, $\norm{\Delta H}_2=\norm{\Delta H}_F$). Thanks to hypercontractivity, we can bound the $4$-norm of $\Delta H$ via the $2$-norm of $\Delta H$ as
    \begin{equation*}
        \|\Delta H\|_4^4 \le 9^k\|\Delta H\|_2^4.
    \end{equation*}
    From here, the Paley-Zygmund inequality yields
    \begin{equation*}
        \Lambda(\Delta H, \|\Delta H\|_F) = \Pr[|F(s,t)|\ge \|\Delta H\|_F] \ge \frac{1}{3}\cdot 9^{-k}.
    \end{equation*}
    Hence, we have that (see \cref{lem:boundonGamma} for a precise statement)
    \begin{equation}
        \norm{\Delta H}_F\geq \eps\implies \Lambda (\Delta H, \eps)=\Omega (9^{-k}),\label{eq:boundonGamma}
    \end{equation}
    which is the key equation that we need to show the correctness of our algorithm. 
    
    Now, we are ready to describe our simple algorithm. It consists of repeating the following $O(1)$ times:  sample a uniformly random $t\in [0,2/\eps]$, perform Bell sampling of $e^{-it\Delta H}$ a total of $O(9^{2k})$ times, and output FAR if \emph{many} non-identity Paulis are measured. If one of the repetitions outputs FAR, the final output is FAR; and if none of the repetitions outputs FAR, the final output is EQUAL. Note that the total evolution time is at most $O(9^{2k})\cdot (2/\eps)=O_{k}(1/\eps)$, as claimed.

    Next, we analyze the correctness of our algorithm. In case $\Delta H=0,$ the time evolution unitary $e^{-it\Delta H}$ equals the identity, so even with the Trotter error, we would only measure \emph{a few} non-identity Paulis. Thus, we will output EQUAL, as desired. If $\norm{\Delta H}_F\geq \eps$, then by \cref{eq:boundonGamma} we have that $\Lambda(\Delta H,\eps)=\Omega(9^{-k})$. From here and \cref{eq:Itupperbound}, it follows that $I(t)\leq 1-\Omega(9^{-k})$ with high probability for a random $t\sim [0,2/\eps]$. Thus, with high probability at least one of the repetitions of the iteration will output FAR, as desired.

    \paragraph{Other complexity measures and further remarks about our algorithms.} Our algorithm for Hamiltonian certification from dynamics is also efficient in other complexity measures such as query complexity. Notably, while the query complexity depends explicitly on an operator norm bound $C_{\text{op}}$ for the Hamiltonians $H$ and $H_0$, the overall evolution time and the number of quantum experiments of our algorithm are independent of $C_{\text{op}}$.
    Furthermore, our algorithm is robust against a constant amount of state-preparation and measurement (SPAM) errors. Finally, we do not need to perform Bell sampling, but just to estimate the probability of outputting identity when performing Bell sampling, which can be achieved without quantum memory~\cite{arunachalam2025testing} (see \cref{lem:memorylessPaulisampling} below). This is essential for near-term applications. We refer to~\cref{thm:klocalcertificationdiagonal} for the precise statement of our results, which includes all these details.

    \subsubsection{Learning thermal states}
    There is plethora of results about learning quantum Hamiltonians from access to the associated Gibbs states \cite{anshu2021sample, haah2022optimal, rouze2023learning, onorati2023efficient, bakshi2023learning, Gu2022Practical,chen2025learning}, which, as noted in \cite[Remark 18]{anshu2021sample}, implies learning the Gibbs state itself. 
    However, this Hamiltonian learning-based approach to to the problem of Gibbs state learning inherits a $\Omega(e^{\beta})$ lower bound on the sample complexity, i.e.\ the number of copies of the state used by the protocol~\cite[Theorem 1.2]{haah2022optimal}. Here, we circumvent this caveat and obtain a learning algorithm for Gibbs states that is sample-efficient with respect to every parameter (see \cref{theo:GibbsStateLearning} for a formal statement). 
    \begin{result}\label{result:learningGibbsStates}
        Let $\rho_H(\beta)$ be the Gibbs state of an unknown $n$-qubit local Hamiltonian $H$ at temperature $\beta$ with $|h_P|\leq 1$ for every $P$. Then, there is an algorithm that, with high probability, $\eps$-learns $\rho_H(\beta)$ in trace norm using only $\widetilde O(\poly(n)\beta^2/\eps^4)$ single copies of the state. 
    \end{result}
    In the case of $\beta=\poly(n)$, Result~\ref{result:learningGibbsStates} achieves Gibbs state tomography with exponential speedup over general state tomography, which requires $\Theta(4^n)$ copies of the state \cite{o2016efficient,haah2017sample}. Notably, our result, in contrast to all the aforementioned prior works, only requires $k$-locality of the Hamiltonian, and no further assumptions (such as every qubit being acted on by a constant number of Pauli operators) are made. Alas, our algorithm achieving  Result~\ref{result:learningGibbsStates} is not time-efficient, similarly to the first algorithm for learning quantum Hamiltonians from Gibbs states \cite{anshu2021sample}.

    The time-inefficiency is intrinsic to the $\eps$-covering net argument underlying the proof of Result~\ref{result:learningGibbsStates}. The proof starts by establishing the following inequality, which is a consequence of Pinsker's inequality (see Lemma~\ref{lem:dSKLGibbsStates} for a proof):
    \begin{equation}\label{eq:dsklintro}
        \norm{\rho_H(\beta)-\rho_{H'}(\beta)}_{\tr}
        \leq \sqrt{2 \beta \Tr[(\rho(\beta)-\rho'(\beta))(H'-H)]}
        =O(\beta n^k\max_{P:  |P|\le k} |h_P-h_P'|)
    \end{equation}  
    for every pair of $k$-local Hamiltonians $H, H'$. This bound ensures that the set $$\mathcal S_\eta=\{\rho_H(\beta):\, H\in\mathcal H_\eta\}$$ of Gibbs states, where
    $$\mathcal H_\eta=\{H:H\ k\text{-local Hamiltonian with }h_P\in \eta\mathbb Z\cap [-1,1]\ \forall P\},$$
    is an $\eps$-covering net of the set of $k$-local Gibbs states when taking $\eta$ of the order $\eps/(\beta n^k)$. Next, we note that the observables $H-H'$ for $H,H'\in\mathcal H_\eta$ are sums of $k$-local Pauli strings. Hence, using classical shadows \cite{huang2020predicting} (see \cref{theo:Shadows}), we can simultaneously obtain accurate estimates $\Delta_{H,H'}$ for all $\Tr[\rho(H-H')]$ in a sample-efficient manner, where $\rho$ is the state to be learned. If these estimates were exact and the state belonged to the net, then by \cref{eq:dsklintro} one would be able to identify the state. The rest of the proof consists of showing that, even if the state does not belong to the net and with an error in the estimates, the state
    \begin{equation*}
        \rho'=\text{argmin}_{\rho'\in\mathcal S_\eta}\max_{H,H'\in\mathcal H_\eta}|\Delta_{H,H'}-\Tr[\rho'(H-H')]|
    \end{equation*}
    satisfies $\norm{\rho-\rho'}_{\tr}\leq \eps$ with high probability \cite{Yatracos1985Jun,buadescu2021improved}.
    
    \subsubsection{Certifying thermal states}
    We also show that quantum state certification of local Gibbs states can be made sample and time-efficient with respect to all parameters, resolving a question by Anshu \cite[Section 2]{anshu2022some} (see \cref{theo:certifyingGibbsStates} for a formal statement).

    \begin{result}\label{result:certifyingGibbsStates}
        Let $\rho_H(\beta)$ and $\rho_{H_0}(\beta)$ be the Gibbs states of an $n$-qubit local Hamiltonian $H$ and $H_0$ at temperature $\beta$ with $|h_P|,|(h_0)_P|\leq 1$ for every $P$. Then, there is an algorithm that, with high probability, decides whether $\rho_{H}(\beta)=\rho_{H_0}(\beta)$ or $\norm{\rho_{H}(\beta)-\rho_{H_0}(\beta)}_{\tr}\geq \eps$ using only $\widetilde O(\poly(n)\beta^2/\eps^4)$ single copies of the states. 
    \end{result}
   In the case of $\beta=\poly(n)$, Result~\ref{result:certifyingGibbsStates} shows an exponential speedup for Gibbs state certification over general state certification, which requires $\Theta(2^n)$ copies of the state \cite{odonnell2015quantum, buadescu2019quantum}. Furthermore, the algorithm behind Result~\ref{result:certifyingGibbsStates} is time-efficient in every parameter (in contrast with Result~\ref{result:learningGibbsStates}). The proof of Result~\ref{result:certifyingGibbsStates} is based on an inequality of the kind of \cref{eq:dsklintro}, which we expect to be useful in other scenarios, and which has seen applications in the classical literature~\cite{santhanam2012information,daskalakis2019testing}. 

    \subsection{Related work}

    \textbf{Hamiltonian certification.} As we mentioned, the only prior work directly focused on quantum Hamiltonian certification was the one of Gao, Ji, Wang, Yu, and Zhao \cite{gao2025quantum}. Their main result shows that one can certify Hamiltonians that are $s$-sparse in the Pauli basis with $O(s^{3/2}/\eps)$ evolution time. They also show that if one allows access to the inverse of the time evolution (which lacks a clear physical motivation, as it amounts to allowing time to flow backwards), then one can achieve an optimal $O(1/\eps)$ for arbitrary Hamiltonians. Our results show that neither reverse- nor controlled-time evolution access are needed to achieve Heisenberg scaling in Hamiltonian certification under a locality promise.

    \paragraph{Hamiltonian learning.} Prior approaches to $O(1)$-local Hamiltonian learning could be applied for Hamiltonian certification, but they would yield suboptimal results. Indeed, the approaches focused on attaining Heisenberg limited scaling require a sparsity assumption, and as general local Hamiltonians are $\poly(n)$-sparse (in fact, there can be $\poly(n)$ Pauli terms involving a given particle), these works yield evolution times of the order $O(\poly(n)/\eps)$ \cite{huang2023heisenberg,bakshi2024structure}. The Hamiltonian learner of \cite{arunachalam2025testing} does not assume sparsity and has no dependence on $n$, but the total evolution time of their protocol is $O(1/\poly(\eps))$, which is not optimal for certification either.
    
    \paragraph{Hamiltonian property testing.} Although the literature on Hamiltonian learning is much larger than the one on Hamiltonian property testing, several recent works studied questions about Hamiltonian property testing from dynamics \cite{aharonov2022quantum,laborde2022quantum,she2022unitary,bluhm2024hamiltonian,arunachalam2025testing,kallaugher2025hamiltonian,gao2025quantum,sinha2025improvedhamiltonianlearningsparsity}. They focus on testing for structural properties of Hamiltonians, such as locality or sparsity, but none of them achieves a provably optimal result. Thus, Result~\ref{result:CertificationTimeEvolution} is the first optimal result in Hamiltonian property testing.

    \paragraph{Certification in other settings.} The problem of certification has been extensively studied in other contexts, both classical and quantum. Maybe the most fundamental variant of certification is the uniformity testing of distributions, which is the problem of testing from samples whether a probability distribution is uniform or far from it (see \cite{paninski2008coincidence} for the first optimal algorithm for this problem, and the survey of Canonne for a discussion on the plethora of approaches to the problem \cite{canonne2022topics}). The quantum access version of that problem has also been studied (and remains open), where one has access to states whose amplitudes square to the probability distribution \cite{bravyi2011quantum,chakraborty2010newresults,luo2024succinct,canonne2024uniformity}. The fully quantum variant of this problem is the problem of testing whether a state is the maximally mixed one, and its resolution was central in the tomography developments of O'Donnell and Wright \cite{odonnell2015quantum,o2016efficient}. Finally, there are also works on the certification of other quantum objects, such as unitaries and channels \cite{fawzi2023quantum-channel-certification,Rosenthal2024Sep,jeon2025query}.

    \paragraph{Improved Bell sampling bounds.}
    Sinha and Tong~\cite{sinha2025improvedhamiltonianlearningsparsity} established sharp short-time bounds on Bell sampling. They show that for any traceless Hamiltonian $H$ with operator norm $\|H\|_{\mathrm{op}} \le L$, the probability $I(t)$ of obtaining the Bell outcome $I^{\otimes n}$ after time evolution $e^{-iHt}$ satisfies
        \begin{equation*}
            1 - t^2 \|H\|_F^{\,2}  \le I(t) \le 1 - 2ct^2\|H\|_F^{\,2},
        \end{equation*}
        for any $c \in (0,1/2)$, provided that $t \le {t^*(c)}/{(2L)}$, where $t^*(c) \in (0,2\pi)$ is defined implicitly by $\cos(t^*(c)) = 1 - c(t^*(c))^2$. These inequalities capture the quadratic short-time decay of $I(t)$, but the admissible time window  
        $t \le t^*(c)/(2L)$ is governed by the operator norm of $H$, and therefore remains in the regime  
        \(t = O(1/\|H\|_{\mathrm{op}})\).  
        This is too restrictive for certification tasks requiring evolution times on the order of \(1/\varepsilon\). 
        
        A natural question is whether similar two-sided control of $I(t)$ can be extended to the longer-time regime $t = \Theta(1/\varepsilon)$, under the promise $\|H\|_F \ge \varepsilon$. If such bounds were available, then one could directly distinguish the cases  
        \(\|H\|_F = 0\) and \(\|H\|_F \ge \varepsilon\) using only $\Theta(1/\varepsilon)$ total evolution time. We show that this long-time extension is indeed achievable for $O(1)$-local Hamiltonians. While the Sinha–Tong~\cite{sinha2025improvedhamiltonianlearningsparsity} bounds are inherently restricted to the short-time region determined by the operator norm, we derive new estimates that remain valid up to times of order \(1/\varepsilon\). These improved Bell sampling bounds establish that constant-locality Hamiltonians admit optimal $\Theta(1/\varepsilon)$-time intolerant certification.
    
    \subsection{Discussion and open problems}

    In this work, motivated by the importance of local Hamiltonians to various areas of quantum science, we have explored the tasks of certifying and learning such Hamiltonians. 
    First, we have given an algorithm for local Hamiltonian certification with optimal total evolution time, thus providing to our knowledge the first optimal bound for any Hamiltonian property testing task in the time-evolution access model.
    Next, we have shifted our focus from the Hamiltonians themselves to the associated Gibbs states. For both learning and certification, this change of perspective allowed us to develop fully sample-efficient---and, in the case of certification, even time-efficient---algorithms.
    This in particular overcomes a known exponential-in-$\beta$ lower bound on learning Hamiltonians from access to copies of the Gibbs state, thus (re-)positioning Gibbs state learning and testing as tasks of independent interest alongside Hamiltonian learning and testing.

    We conclude this introduction by posing several open questions arising from our results:
    \vspace{-0.5cm}
    \paragraph{Optimal certification from dynamics without locality assumptions.} Our main result exploits the locality
        of the underlying Hamiltonians, and in the $O(1)$-local regime we show that optimal $O(1/\eps)$ total
        evolution time is achievable even without inverse or controlled evolution. A fundamental open problem is whether such $O(1/\eps)$ evolution-time certification remains possible without any locality assumptions on
        $H$ or $H_0$. Previous optimal certification results for general (non-local) Hamiltonians, such as those
        of \cite{gao2025quantum}, rely on access to inverse- or controlled-time evolution. Resolving this question would clarify whether locality is merely a technical convenience
        in our analysis or whether it plays a fundamental role in enabling evolution-time-optimal protocols.

    \paragraph{Time-efficient Gibbs state learning.} The seminal result of learning Hamiltonians via access to the Gibbs state of \cite{anshu2021sample} was only sample-efficient (with respect to $n$), and it was made time-efficient in a series of follow-up works \cite{haah2022optimal,bakshi2023learning}. Similarly, our Result~\ref{result:learningGibbsStates} is, to our knowledge, the first algorithm for learning Gibbs states that is sample-efficient in all parameters. It is thus natural to wonder: Is there an algorithm for learning Gibbs states of $k$-local Hamiltonians that is both sample- and time-efficient in every parameter? 
    
    \paragraph{Optimal Gibbs state certification.} Result~\ref{result:certifyingGibbsStates} is already efficient in both sample and time complexity, but we lack a matching lower bound. Even for its classical counter-part \cite{daskalakis2019testing}, the precise complexity of local Gibbs states seems to be unknown.
        Thus, we ask: What is the optimal sample-complexity of certifying local Gibbs states?

    \paragraph{Certification from local probes.} Recent advances in Hamiltonian learning suggest that even
    severely restricted access, such as single-site or constant-size probes of a system undergoing time
    evolution, can still yield meaningful reconstruction guarantees \cite{chen2025quantumprobe}. This motivates a natural
    open problem for certification: can you efficiently certify a $k$-local Hamiltonian
    using only the dynamics observed through a strictly local probe? Another central question is whether
    there exists a quantitative tradeoff between probe size and certification complexity. For example, a
    constant-size probe may or may not suffice to achieve optimal $\Theta(1/\eps)$ evolution time, and a larger
    probe may be fundamentally required to detect the spectral features that distinguish $H$ from $H_0$.
    Understanding this relationship remains open and would help bridge the gap between theoretically
    optimal certification protocols and experimentally realistic settings with limited local access.

    \paragraph{Optimal sparsity testing.}     Our algorithm for certification from dynamics relies on spectral gap arguments for Hamiltonians. These ingredients resemble those used in recent work on learning and testing $s$-sparse Hamiltonians by Sinha and Tong \cite{sinha2025improvedhamiltonianlearningsparsity}. It
    is therefore natural to ask whether similar techniques can lead to optimal algorithms for Hamiltonian sparsity testing, maybe under the promise of locality.

    \paragraph{Beyond Hamiltonians.} Our analysis focuses on closed-system dynamics generated by time-independent Hamiltonians. Many realistic devices, however, are better modeled by quantum channels or Lindbladian evolutions, such as Pauli channels induced by noise processes. A natural direction is to investigate whether the techniques developed here 
    can be adapted to certify dynamical maps such as Pauli channels or Linbladians.

    \paragraph{Note.} This manuscript subsumes the optimal certification protocol for constant-local quantum Hamiltonians due to Lee and Shin~\cite{lee2025optimal}, as well as the certifying and learning algorithms for quantum Ising Hamiltonians developed by Bluhm, Caro, Escudero-Gutiérrez, Oufkir, and Rouzé~\cite{bluhm2025certifying}.

\section{Preliminaries}
We start by introducing some notation. $I,X,Y$ and $Z$ are the 1-qubit Pauli matrices, and a tensor product of these matrices is called a Pauli string.  Any matrix $A$ acting on $n$ qubits is a matrix of $(\C^{2\times 2})^{\otimes n}$. Such a matrix can be expressed as a linear combination of Pauli strings via its Pauli expansion $A=\sum_{P\in\{I,X,Y,Z\}^{\otimes n}}a_PP.$ Here, $a_P$ are the Pauli coefficients and they are determined by $$a_P=\frac{1}{2^n} \Tr[PA].$$ 
A Pauli string is called $k$-local if it acts as identity in all but at most $k$ qubits. The number of $k$-local Pauli strings is at most 
\begin{equation}\label{eq:numberofklocalPaulis}
    100n^k,
\end{equation}
because
\begin{align*}
        \sum_{l=0}^k 3^l\binom{n}{l}\leq \left\{\begin{array}{ll}
             (k+1)3^k\left(en/k\right)^k\leq 100 n^k             & \text{if } k<n/2 \\ 
            4^n\leq 20 n^{n/2}\leq 20n^k & \text{if }k\geq n/2
        \end{array}\right.,
\end{align*}
where we have used that $(3e/k)^k(k+1)<100$ and $4^n\leq 20 n^{n/2}$ for every $n,k\in\mathbb N$. Given a matrix $A$ acting on $n$ qubits, $\norm{A}_{\op}$ denotes the usual operator norm, i.e., the largest singular value of $A;$ $\norm{A}_{\tr}$ is the trace norm, i.e., the sum of the singular values of $A$; and $\norm{A}_{F}=\sqrt{\Tr[A^\dagger A]/2^n}$ is the normalized Frobenius norm. The Pauli strings are an orthonormal basis with respect to the inner product $\langle A,B\rangle=\Tr[A^{\dagger}B]/2^n.$ In particular, Parseval's identity states that
\begin{equation*}
    \norm{A}_{F}=\sqrt{\sum_{P\in\{I,X,Y,Z\}^{\otimes n}}|a_P|^2}.
\end{equation*}
A more general version of Parseval's identity is Plancherel's identity, which states that 
\begin{equation*}
  \langle A,B\rangle\equiv   \frac{\Tr[A^\dagger B]}{2^n}=\sum_{P\in\{I,X,Y,Z\}^{\otimes n}}\bar{a}_Pb_P\, ,
\end{equation*}
where for $z\in\C$, $\bar z$ denotes the complex conjugate of $z.$ We use $\widetilde \Omega(\cdot)$ and $\widetilde O(\cdot)$ to hide polylogarithmic factors of the quantities inside the parentheses.

\subsection{Hamiltonians}
An $n$-qubit Hamiltonian is a self-adjoint matrix acting on $n$ qubits. In particular, a matrix $A$ is a Hamiltonian if and only if $a_P\in \mathbb R$ for every $P\in\{I,X,Y,Z\}^{\otimes n}$. A Hamiltonian $H$ is $k$-local if $h_P=0$ for every $P=P_1\otimes \dots \otimes P_n$ such that $|P|:=|\{i\in [n]:\ P_i\neq I\}|>k.$ We will assume that Hamiltonians are traceless, meaning that $h_{I^{\otimes n}}=\Tr[H]/2^n$=0. This is without loss of generality, because two Hamiltonians that only differ in a multiple of identity determine the same time evolution operators and the same Gibbs states. 

\subsubsection{Access via dynamics} 
Hamiltonians govern the dynamics of (closed) quantum systems according to the Schr\"odinger equation. In particular, if a quantum system governed by a time-independent Hamiltonian $H$ and the state describing the system at time $0$ is $\rho,$ 
at time $t$ the state will have evolved to $U_H(t)\rho U_H^\dagger (t)$, where $U_H(t)=\exp(-itH)$ is the time evolution operator of $H$ at time $t$. 

Thus, a natural access model for Hamiltonians is to perform \emph{experiments} of the following kind: prepare a state $\rho,$ apply $U_H(t_1)$---that is, make a query to $U_H(t_1)$, which in a lab can be implemented by letting the system evolve for time $t_1$---, apply a unitary operator $V_1$ independent of $H$, query $U_H(t_2)$, apply a unitary operator $V_2$ independent of $H$, query $U_H(t_2)$,$\dots$ and finally measure. 
In this access model, there are different potentially relevant figures of merit. The one usually considered as the most important is the \emph{total evolution time}, which is the sum of all times $t_i$ at which the algorithm queries $U_H(t)$. Other figures of merit that we will also keep track of are the \emph{number of experiments}, the \emph{number of queries,} the \emph{time resolution} (i.e., the minimum time at which the algorithm queries the time evolution operator), the \emph{classical post-processing time}, and the number of \emph{ancilla qubits}.

Finally, our algorithms will also be robust to \emph{state-preparation and measurement (SPAM) error}. Following \cite[Definition 4]{ma2024learning}, an experiment suffers from an $\eps$-amount of SPAM error if the error channels applied after the initial state preparation and before the first query and the error channels after the last query and before the measurement induce in total $\eps$ error in diamond norm. We will say that an algorithm is \emph{robust} to an $\eps$ amount of SPAM error (or any other error) if the performance guarantees of the algorithm do not change in the presence of that error, maybe after increasing the complexities by constant factors. 

\subsection{Access via Gibbs state}
Hamiltonians also determine the equilibrium states of quantum systems. In particular, if a quantum system is governed by a Hamiltonian $H$, then the equilibrium state of the system at inverse temperature $\beta>0$ is  the \emph{Gibbs state} given by $\rho(\beta)=e^{-\beta H}/\Tr[e^{-\beta H}]$. 

An alternative access model for Hamiltonians is hence to perform measurements on copies of the Gibbs state of the Hamiltonian. The main figure of merit in this model is the \emph{sample complexity,} i.e., the number of copies of the Gibbs state that the algorithm accesses. Other important figures of merit that we will keep track of are the \emph{maximum number of copies that the algorithm measures coherently} and the \emph{classical post-processing time}. In particular, we say that an algorithm uses \emph{single copies} of the state if it measures one copy of the state at a time.

All of our results in this access model use the following upper bounds on the trace distance between Gibbs states, which are well-known in the classical literature \cite{santhanam2012information,daskalakis2019testing}, and similar bounds have been used in the quantum literature \cite{anshu2021sample,fanizza2024efficient}. 

\begin{lemma}\label{lem:dSKLGibbsStates}
    Let $\rho(\beta)$ and $\rho'(\beta)$ be Gibbs states of two $k$-local Hamiltonians $H$ and $H'$ acting on $n$ qubits. Then,
    \begin{equation}\label{eq:trnormforGibbs0}
        \norm{\rho(\beta)-\rho'(\beta)}_{\tr}\leq \sqrt{2 \beta \Tr[(\rho(\beta)-\rho'(\beta))(H'-H)]}.
    \end{equation}
    In particular,
    \begin{equation}\label{eq:trnormforGibbs1}
        \norm{\rho(\beta)-\rho'(\beta)}_{\tr}\le  200\beta n^k \sup_{|P|\leq k}|h_P-h_P'|.
    \end{equation}
    Furthermore, if $|h_P|,|h'_P|\leq 1$ for every $P\in\{I,X,Y,Z\}^{\otimes n}$, then 
    \begin{equation}\label{eq:trnormforGibbs2}
        \norm{\rho(\beta)-\rho'(\beta)}_{\tr}\le \sqrt{400\beta n^k \sup_{|P|\leq k}2^n|\rho(\beta)_P-\rho'(\beta)_P|}.
    \end{equation}
\end{lemma}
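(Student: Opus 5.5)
The first inequality will follow from Pinsker's inequality combined with the symmetrized relative entropy. Write $\rho=\rho(\beta)=e^{-\beta H}/Z$ and $\rho'=\rho'(\beta)=e^{-\beta H'}/Z'$. Then $\log\rho=-\beta H-\log Z\cdot I$ and $\log\rho'=-\beta H'-\log Z'\cdot I$. Hence
\begin{equation*}
D(\rho\|\rho')+D(\rho'\|\rho)=\Tr[(\rho-\rho')(\log\rho-\log\rho')]=\beta\Tr[(\rho-\rho')(H'-H)],
\end{equation*}
because the identity terms cancel when traced against the traceless $\rho-\rho'$. Pinsker's inequality (in the natural-log convention) gives $\norm{\rho-\rho'}_{\tr}^2\le 2D(\rho\|\rho')$, and likewise $\norm{\rho-\rho'}_{\tr}^2\le 2D(\rho'\|\rho)$. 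Averaging the two bounds yields $\norm{\rho-\rho'}_{\tr}^2\le D(\rho\|\rho')+D(\rho'\|\rho)$, which is even a factor of $2$ stronger than \eqref{eq:trnormforGibbs0}; in particular \eqref{eq:trnormforGibbs0} holds. The only point to check here is the Pinsker constant with the trace norm normalized as the sum of singular values, and the margin just noted absorbs it.

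For \eqref{eq:trnormforGibbs1}, I will bound the right-hand side of \eqref{eq:trnormforGibbs0} by Hölder's inequality term by term in the Pauli expansion. Since $H'-H=\sum_{|P|\le k}(h'_P-h_P)P$, we have
\begin{equation*}
|\Tr[(\rho-\rho')(H'-H)]|\le \sum_{|P|\le k}|h_P-h'_P|\,\norm{\rho-\rho'}_{\tr}\norm{P}_{\op}\le 100n^k\sup_{|P|\le k}|h_P-h'_P|\,\norm{\rho-\rho'}_{\tr},
\end{equation*}
using the count \eqref{eq:numberofklocalPaulis}. Plugging this into \eqref{eq:trnormforGibbs0} gives $\norm{\rho-\rho'}_{\tr}^2\le 200\beta n^k\sup_P|h_P-h'_P|\,\norm{\rho-\rho'}_{\tr}$. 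Dividing by $\norm{\rho-\rho'}_{\tr}$, which is harmless since the case $\norm{\rho-\rho'}_{\tr}=0$ is trivial, yields \eqref{eq:trnormforGibbs1}.

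For \eqref{eq:trnormforGibbs2}, I will instead expand $\Tr[(\rho-\rho')(H'-H)]$ via Plancherel, which gives $2^n\sum_{|P|\le k}(\rho_P-\rho'_P)(h'_P-h_P)$. Each coefficient difference satisfies $|h'_P-h_P|\le 2$, and there are at most $100n^k$ terms, so the whole expression is at most $200n^k\sup_{|P|\le k}2^n|\rho_P-\rho'_P|$. Substituting this into \eqref{eq:trnormforGibbs0} gives exactly $\sqrt{400\beta n^k\sup_P 2^n|\rho_P-\rho'_P|}$.

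None of these steps is a real obstacle. The only delicate points are the sign and identity cancellation in the relative-entropy computation and the Pinsker constant, both handled above.
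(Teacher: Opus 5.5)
Your proposal is correct and follows the paper's proof essentially step for step. It uses Pinsker's inequality with $D(\rho\|\rho')+D(\rho'\|\rho)=\beta\Tr[(\rho-\rho')(H'-H)]$ for \eqref{eq:trnormforGibbs0}, trace/operator-norm H\"older with the $100n^k$ Pauli count and division by $\norm{\rho-\rho'}_{\tr}$ for \eqref{eq:trnormforGibbs1}, and Plancherel with $|h_P-h'_P|\le 2$ for \eqref{eq:trnormforGibbs2}; the only difference is that you average the two Pinsker bounds, gaining a harmless factor of $2$, where the paper simply adds them.
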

\begin{proof}
    We start using Pinkser inequality to upper bound the trace norm as 
    \begin{align*}
        \norm{\rho(\beta)-\rho'(\beta)}_{\tr}\le\sqrt{2\Tr[\rho(\beta)(\log\rho(\beta)-\log\rho'(\beta))]+2\Tr[\rho'(\beta)(\log\rho'(\beta)-\log\rho(\beta))]}.
    \end{align*}
    Now, expanding the right-hand side and using that $\log \rho(\beta)=-\beta H-Z(\beta)$, where $Z(\beta)=\Tr[e^{-\beta H}]$, we arrive at 
    \begin{align}
        \norm{\rho(\beta)-\rho'(\beta)}_{\tr}&\le \sqrt{2\beta\Tr[(\rho(\beta)-\rho'(\beta))(H'-H)]}.\label{eq:trnormforGibbs3}
    \end{align}
    This proves \cref{eq:trnormforGibbs0}.

    Now, we focus on proving \cref{eq:trnormforGibbs1}. On the one hand, using \cref{eq:trnormforGibbs3} and that $|\Tr[A^\dagger B]|\leq \norm{A}_{\tr} \norm{B}_{\op}$ we get 
    \begin{align*}
        \norm{\rho(\beta)-\rho'(\beta)}_{\tr}
        &\le \sqrt{2\beta\norm{\rho(\beta)-\rho'(\beta)}_{\tr}\norm{H-H'}_{\op}}\, ,
    \end{align*}
    so 
    \begin{align}
        \norm{\rho(\beta)-\rho'(\beta)}_{\tr}
        &\le 2\beta\norm{H-H'}_{\op},\label{eq:1}
    \end{align}
    On the other hand, by triangle inequality and \cref{eq:numberofklocalPaulis}, we have that 
    \begin{align*}
        \norm{H-H'}_{\op}&\leq  100 n^k \sup_{|P|\leq k}|h_P-h_P'|,
    \end{align*}
    which combined with \cref{eq:1} proves \cref{eq:trnormforGibbs1}. 

    Now, we focus on proving \cref{eq:trnormforGibbs2}. Using \cref{eq:trnormforGibbs3} and Plancherel's identity we arrive at 
    \begin{align*}
        \norm{\rho(\beta)-\rho'(\beta)}_{\tr}&\le\sqrt{2\beta\sum_{|P|\leq k}2^n(\rho(\beta)_P-\rho'(\beta)_P)(h_P-h'_P)}.
    \end{align*}
    Hence, as $|h_P|,|h_P'|\leq 1$ and by \cref{eq:numberofklocalPaulis}, we have that 
    \begin{align*}
        \norm{\rho(\beta)-\rho'(\beta)}_{\tr}&=\sqrt{400\beta n^k\sup_{|P|\leq k}2^n|\rho(\beta)_P-\rho'(\beta)_P|},
    \end{align*}
    which proves \cref{eq:trnormforGibbs2}.
\end{proof}

\subsubsection{Trotterization}
Given access to $e^{-itA}$ and $e^{-itB}$ for two Hamiltonians $A$ and $B$ and arbitrary times $t$, Trotterization allows us to implement $e^{-it(A+B)}$ up to arbitrary error while also preserving the total time evolution and without using extra qubits. Thus, to analyze the number of experiments and the total time evolution required by our algorithms, if we have access to $e^{-itA}$ and $e^{-itB}$, we may assume access to $e^{-it(A+B)}$. However, the number of queries and the time resolution change. To be more precise, we will use the following result. 

\begin{theorem}[{\cite[Corollary 2]{childs2021theory}}]\label{theo:trotterization}
    Let $t>0$, let $\eps>0$, let $H,H_0$ be Hamiltonians acting on $n$-qubits, and let $c=\max\{\norm{H}_{\op},\norm{H_0}_{\op}\}$. Let $l=\left\lceil O\left(\sqrt{(ct)^3/\eps_{\operatorname{Trott}}}\right)\right\rceil$ and define $V=(e^{-itH/2l}  e^{itH_0/l}e^{-itH/2l})^l$. Then, 
        \begin{equation*}
        \norm{e^{-it(H-H_0)}-V}_{\op}
        \leq \eps_{\operatorname{Trott}}.
    \end{equation*}
\end{theorem}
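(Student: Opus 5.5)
The statement is a second-order Trotter (symmetric Strang splitting) error bound, so I would prove it by first bounding the error of a single symmetric step and then summing over the $l$ steps. Set $A=H$, $B=-H_0$ and $\tau=t/l$, so that $\|A\|_{\op},\|B\|_{\op}\le c$, the target is $e^{-it(A+B)}=(e^{-i\tau(A+B)})^l$, and
\[
V=\bigl(e^{-i\tau A/2}e^{-i\tau B}e^{-i\tau A/2}\bigr)^l .
\]
Since all factors are unitary, the telescoping identity $X^l-Y^l=\sum_{j=0}^{l-1}X^{j}(X-Y)Y^{l-1-j}$ gives
\[
\norm{e^{-it(A+B)}-V}_{\op}\le l\,\norm{e^{-i\tau(A+B)}-S(\tau)}_{\op},
\]
where $S(\tau)=e^{-i\tau A/2}e^{-i\tau B}e^{-i\tau A/2}$. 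It therefore suffices to show that a single step has error $O(c^3\tau^3)$.

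For the local error, I would write $S(\tau)=e^{-i\tau(A+B)}+R(\tau)$ and control $R$ via the variation-of-parameters formula:
\[
R(\tau)=\int_0^\tau e^{-i(\tau-s)(A+B)}\bigl(S'(s)+i(A+B)S(s)\bigr)\,ds .
\]
The key fact is that the defect $S'(s)+i(A+B)S(s)$ vanishes at $s=0$, and so does its first derivative, because the splitting is symmetric. Hence it equals an integral of second derivatives, and these are nested commutators such as $[A,[A,B]]$ and $[B,[B,A]]$ conjugated by unitaries. Each such commutator has operator norm at most $8c^3$, so $\|R(\tau)\|_{\op}\le C c^3\tau^3$ for an absolute constant $C$. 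Checking that the first-order term of the defect cancels is the one real computation. I expect it to be the main obstacle, though only a bookkeeping one: one differentiates $S$, commutes $A$ and $B$ past the exponentials using $e^{-isA/2}Be^{isA/2}=B-\tfrac{is}{2}[A,B]+O(s^2)$, and keeps the remainder in integral form so that the constant stays explicit. Alternatively, one can simply invoke the commutator-scaling bound of \cite{childs2021theory}, which gives $\|R(\tau)\|_{\op}\le \tfrac{\tau^3}{12}\|[B,[B,A]]\|_{\op}+\tfrac{\tau^3}{24}\|[A,[A,B]]\|_{\op}$.

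Combining the two steps,
\[
\norm{e^{-it(A+B)}-V}_{\op}\le l\cdot C c^3 (t/l)^3 = C\,\frac{(ct)^3}{l^2}.
\]
Choosing $l=\lceil \sqrt{C(ct)^3/\eps_{\operatorname{Trott}}}\rceil$ makes the right-hand side at most $\eps_{\operatorname{Trott}}$, which is exactly the claimed form $l=\lceil O(\sqrt{(ct)^3/\eps_{\operatorname{Trott}}})\rceil$. Finally, substituting $A=H$ and $B=-H_0$ back in turns the middle factor $e^{-i\tau B}$ into $e^{itH_0/l}$, which matches the form of $V$ in the statement.
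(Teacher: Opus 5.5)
Your proof is correct and is essentially the argument behind the cited result. The paper gives no proof of its own; it imports \cite[Corollary 2]{childs2021theory}, which is exactly this: telescoping over the $l$ steps, plus a variation-of-parameters bound of order $\tau^3$ on the symmetric second-order step, specialized here to two terms with $\|[\cdot,[\cdot,\cdot]]\|_{\op}=O(c^3)$.

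One minor remark: you do not need to identify the second derivative of the defect $S'(s)+i(A+B)S(s)$ with conjugated nested commutators. The product rule alone bounds it by a constant times $c^3$ in operator norm, and that is all the $O(c^3\tau^3)$ local error requires. The commutator structure only matters for the sharper constants you quote from Childs et al.
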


\subsubsection{Useful subroutines}

We will use the following lemma that was proved in \cite[Lemma 3.3]{arunachalam2024testing}.\footnote{We note that in \cite[Lemma 3.3]{arunachalam2024testing} the authors only explicitly analyze the query complexity of their algorithm, but the analysis of the remaining figures of merit is straightforward.} Before stating it, we recall that a stabilizer subgroup of the group of Pauli matrices $\mathcal{S}\subseteq\{I,X,Y,Z\}^{\otimes n}$ is an abelian subgroup that does not contain $-I$. A stabilizer state corresponding to a stabilizer subgroup $\mathcal{S}$ of dimension $k\le n$ is defined as
\begin{align*}
\rho_{\mathcal{S}}:=\frac{1}{2^n}\sum_{P\in\mathcal{S}}P\,.
\end{align*}

\begin{lemma}\label{lem:memorylessPaulisampling}
    Let $U$ be an $n$-qubit unitary, and let $\eps,\delta>0$. There is a memory-less algorithm that makes $O\big(\log(1/\delta)/\eps^2\big)$ experiments that provides an estimate $|u'_{I^{\otimes n}}|^2$ such that 
    $$
    \big||u_{I^{\otimes n}}|^2-|u_{I^{\otimes n}}'|^2\big|\leq\eps
    $$
    with probability $\geq 1-\delta$. Furthermore, the algorithm makes only one query to $U$ per experiment, only stabilizer states, and only performs Clifford measurements. In addition, it is robust to $\eps/3$ amount of SPAM errors and $\eps/3$ error in diamond norm per query of $U$, and requires only $O\big(\log(1/\delta)/\eps^2\big)$ classical post-processing time.
\end{lemma}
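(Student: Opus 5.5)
Our plan is to estimate $|u_{I^{\otimes n}}|^2$ by writing it as an average of quantities that single-query, memoryless experiments can estimate without bias, and then to apply a Hoeffding bound. The starting identity comes from Pauli orthogonality. With $u_P=2^{-n}\Tr[PU]$ we have
\begin{equation*}
|u_{I^{\otimes n}}|^2=\frac{1}{4^n}|\Tr[U]|^2=\frac{1}{4^n}\sum_{x}\langle x|U^\dagger \ldots\rangle,
\end{equation*}
but the useful form is the tensor-product twirl $\mathbb E_{\psi}|\langle\psi|U|\psi\rangle|^2$. We take $\psi=\bigotimes_i\psi_i$, with each $\psi_i$ uniform over the six single-qubit stabilizer states. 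Since these six states form a $2$-design on one qubit,
\begin{equation*}
\mathbb E_{\psi_i}[\psi_i\otimes\psi_i]=\tfrac{1}{6}(I\otimes I+F_i),
\end{equation*}
where $F_i$ is the swap. Writing $|\langle\psi|U|\psi\rangle|^2=\Tr[(U\otimes \bar U)\ldots]$, or more simply $\Tr[U\psi U^\dagger\psi]$, and expanding $\psi_i=\tfrac12(I+\sigma_i)$ with a uniformly random Pauli $\sigma_i\in\{\pm X,\pm Y,\pm Z\}$, we get
\begin{equation*}
\mathbb E_\psi\Tr[U\psi U^\dagger\psi]=\sum_{P,Q}c_{P,Q}\,u\text{-terms}.
\end{equation*}
Only correlated Pauli pairs survive: each qubit contributes weight $1/2$ for the identity/identity pair and $1/6$ for each of the three matched non-identity pairs. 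The result is a weighted sum $\sum_P 2^{-|P|}\ldots$ that is not exactly $|u_I|^2$.

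To isolate exactly $|u_{I^{\otimes n}}|^2$, we will follow the approach of \cite{arunachalam2024testing}. In each experiment we prepare $\psi$ as above, apply $U$ once, and measure each qubit in the basis of its own stabilizer. We then output the product estimator $\prod_i f(\text{outcome}_i)$ with signs $f=\pm$, suitably rescaled per qubit, chosen so that $\mathbb E[\text{output}]=2^{-n}\sum\ldots=|u_{I}|^2$. Concretely, qubit by qubit we use the identity
\begin{equation*}
\frac{1}{2}\sum_{\sigma\in\{X,Y,Z\}}\Tr[\sigma\,\cdot\,\sigma]\text{-type combinations}
\end{equation*}
to project onto the identity component of the channel's Pauli transfer diagonal. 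Equivalently, we estimate $|u_I|^2=\frac{1}{4^n}\sum_P \Tr[P U P U^\dagger]/2^n$ after inverting the per-qubit depolarizing weights. The main obstacle is this bookkeeping: the estimator has to be bounded in $[-1,1]$ (or $O(1)$) so that Hoeffding gives $O(\log(1/\delta)/\eps^2)$ samples. Naive inversion of the per-qubit weights blows up exponentially, so we expect to need the specific choice in \cite[Lemma 3.3]{arunachalam2024testing}, which is what makes each sample a bounded $\pm$ outcome whose mean equals $|u_I|^2$. We would check this bound carefully first.

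Given a bounded unbiased single-shot estimator, Hoeffding's inequality with $O(\log(1/\delta)/\eps^2)$ experiments gives accuracy $\eps/3$ with probability $1-\delta$. The circuits use only product stabilizer inputs and Pauli measurements, hence Clifford measurements; there is one query per experiment and no memory across experiments; and post-processing is averaging, so it takes $O(\log(1/\delta)/\eps^2)$ time. For robustness, a SPAM error of $\eps/3$ plus a per-query diamond-norm error of $\eps/3$ changes the outcome distribution of each experiment by at most $2\eps/3$ in total variation. Since the estimator is bounded, this shifts its mean by at most $2\eps/3$, and together with the $\eps/3$ statistical error the total stays within $\eps$, after adjusting constants in the sample count.
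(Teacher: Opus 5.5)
\paragraph{Assessment.}
The paper gives no argument of its own for this lemma. It imports \cite[Lemma 3.3]{arunachalam2024testing} and remarks in a footnote that the figures of merit other than the query count are straightforward. Your proposal relies on that same citation for the only nontrivial step, so it stands on the same footing as the paper but is not a self-contained proof. The step you label ``the main obstacle'' is never carried out: you do not exhibit a bounded single-shot estimator whose mean is $|u_{I^{\otimes n}}|^2$. The first half of your sketch does not advance toward it. It contains placeholders ``$\ldots$'', and the twirl weight is wrong: in fact $\mathbb{E}_\psi|\langle\psi|U|\psi\rangle|^2=\sum_P 3^{-|P|}|u_P|^2$, not a $2^{-|P|}$ weighting. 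Your worry that inverting the per-qubit weights blows up exponentially is also unfounded. The experiment you describe already works with an explicit, bounded choice of per-qubit factor.

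\paragraph{The missing estimator.}
For each qubit, draw $\sigma_i\in\{X,Y,Z\}$ and $s_i\in\{\pm1\}$ uniformly and prepare $\bigotimes_i\tfrac12(I+s_i\sigma_i)$. Apply $U$ once, measure qubit $i$ in the eigenbasis of $\sigma_i$ to get $m_i\in\{\pm1\}$, and output
\[
Y=\prod_{i=1}^n\frac{1+3s_im_i}{4}=4^{-n}\sum_{S\subseteq[n]}3^{|S|}\prod_{i\in S}s_im_i .
\]
Note that $\mathbb{E}_{s_i}[s_i\cdot\tfrac12(I+s_i\sigma_i)]=\tfrac12\sigma_i$ and $\mathbb{E}_{s_i}[\tfrac12(I+s_i\sigma_i)]=\tfrac12 I$. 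Hence $\mathbb{E}_{s,m}[\prod_{i\in S}s_im_i]=2^{-n}\Tr[\sigma_SU\sigma_SU^\dagger]$, where $\sigma_S=\bigotimes_{i\in S}\sigma_i$. Averaging over the bases gives $3^{-|S|}\sum_{P:\,\mathrm{supp}(P)=S}2^{-n}\Tr[PUPU^\dagger]$, so the factor $3^{|S|}$ cancels. Using $\sum_P PAP=2^n\Tr[A]\,I$, we get
\[
\mathbb{E}[Y]=\frac{1}{4^n}\sum_{P}\frac{\Tr[PUPU^\dagger]}{2^n}=\frac{|\Tr U|^2}{4^n}=|u_{I^{\otimes n}}|^2 .
\]
Each factor lies in $\{1,-\tfrac12\}$, so $Y\in[-\tfrac12,1]$, and Hoeffding's inequality gives the $O(\log(1/\delta)/\eps^2)$ experiment count.

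An equivalent $\pm1$-valued version works as follows. Sample a uniformly random Pauli $P$ and prepare a random product eigenstate of $P$, using random computational basis states on qubits where $P$ acts as identity. Measure $P$ qubit-wise and output the product of the prepared and observed signs. Its mean is $\mathbb{E}_P[2^{-n}\Tr[PUPU^\dagger]]=|u_{I^{\otimes n}}|^2$.

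\paragraph{Remaining steps.}
With the estimator in hand, your other checks go through: one query per experiment, product stabilizer inputs, single-qubit Pauli (Clifford) measurements, and $O(\log(1/\delta)/\eps^2)$ averaging time. In the robustness step, keep the norms straight. A total diamond-norm error of $2\eps/3$ bounds the $\ell_1$ distance between outcome distributions by $2\eps/3$, i.e.\ the total variation distance by $\eps/3$, for each fixed choice of $(\sigma,s)$. With $Y\in[-\tfrac12,1]$, the mean therefore moves by at most $\eps/2$. Together with the $\eps/3$ statistical error, the total stays below $\eps$.
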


We will also need to perform classical shadow tomography.

\begin{theorem}[Clifford shadows \cite{huang2020predicting}]\label{theo:Shadows}
    Let $\rho$ be an $n$-qubit state and let $k\in\mathbb N$, $\eps>0$ and $\delta>0.$ Then, performing random Pauli measurements on $$O
    \left(\frac{3^kk\log(n/\delta)}{\eps^2}\right)$$ single copies of $\rho$ suffices to obtain estimates $\widetilde \rho_P$ that with probability $\geq 1-\delta$ satisfying $$2^n|\rho_P-\widetilde \rho_P|\leq \eps$$ for every $|P|\leq k$. The classical post-procesing time is $O
    \left((3n)^kk\log(n/\delta)/\eps^2\right)$.
\end{theorem}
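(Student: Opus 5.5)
The plan is to reduce the statement to estimating the expectation values $\Tr[P\rho]=2^n\rho_P$ for all Pauli strings with $|P|\le k$, using the standard single-copy randomized Pauli-measurement estimator of \cite{huang2020predicting}. Each copy of $\rho$ is measured as follows. For every qubit $i$ we draw a basis $b_i\in\{X,Y,Z\}$ independently and uniformly at random. We then measure qubit $i$ in the eigenbasis of $b_i$ and record an outcome $o_i\in\{\pm1\}$. Given a Pauli string $P=P_1\otimes\cdots\otimes P_n$ with support $S=\{i: P_i\neq I\}$, the single-shot estimator is
\begin{equation*}
\hat x_P=\prod_{i\in S} 3\,\mathds{1}[b_i=P_i]\,o_i .
\end{equation*}

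\textbf{Unbiasedness and variance.} First I will show that $\mathbb E[\hat x_P]=\Tr[P\rho]$. This follows because the event $b_i=P_i$ for all $i\in S$ has probability $3^{-|S|}$, which exactly cancels the factor $3^{|S|}$. Conditioned on that event, the product $\prod_{i\in S}o_i$ is the outcome of measuring the observable $P$, so its mean is $\Tr[P\rho]$. Next, since $|\hat x_P|\le 3^{|S|}$ and $\hat x_P\neq 0$ only with probability $3^{-|S|}$, we get $\mathbb E[\hat x_P^2]\le 3^{|S|}\le 3^k$. Hence the variance is at most $3^k$.

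\textbf{Concentration and union bound.} I will use median-of-means. We split $N$ snapshots into $K=O(\log(M/\delta))$ batches of size $O(3^k/\eps^2)$, where $M$ is the number of $k$-local Pauli strings. Chebyshev's inequality makes each batch mean $\eps$-accurate with probability at least $3/4$. A Chernoff bound then makes the median accurate except with probability $\delta/M$. Crucially, the same snapshots serve every $P$ simultaneously, so a union bound over all $|P|\le k$ gives success probability $\ge 1-\delta$. By \cref{eq:numberofklocalPaulis}, $M\le 100n^k$, so $\log(M/\delta)=O(k\log(n/\delta))$. This yields the total $N=O(3^kk\log(n/\delta)/\eps^2)$ copies, each measured individually.

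\textbf{Post-processing.} For each of the at most $100n^k$ strings (more carefully, $\sum_{l\le k}3^l\binom nl\le O((3n)^k)$), we evaluate $\hat x_P$ on each of the $N$ snapshots. Each evaluation costs $O(k)$ time, or $O(1)$ amortized if we first check whether the support matches. We then take medians of means. This gives roughly $O((3n)^k N)$ time; the stated bound $O((3n)^kk\log(n/\delta)/\eps^2)$ should follow after absorbing the $3^k$ factor in $N$ against the fact that only a $3^{-|P|}$ fraction of snapshots contribute for each $P$, via sparse bookkeeping. This bookkeeping is the only step I expect to need care, and it is merely an accounting matter. The statistical core, namely the $3^k$ variance bound, is routine.
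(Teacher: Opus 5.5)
Your proposal is correct and follows the standard Huang--Kueng--Preskill argument, which the paper invokes only by citation without giving a proof. Your single-snapshot estimator $\prod_{i\in S}3\,\mathds{1}[b_i=P_i]\,o_i$ is exactly $\Tr[P\hat\rho]$ for the Pauli classical shadow $\hat\rho$, and its second moment $3^{|P|}$ is the Pauli shadow norm. Median-of-means plus a union bound over the at most $100n^k$ strings then gives the stated copy count.

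The post-processing bookkeeping you allude to works as follows: loop over snapshots, and for each one loop over the $\sum_{l\le k}\binom{n}{l}=O(n^k)$ supports $S$, crediting only the string $P=b|_S$ that matches that snapshot's bases. This costs $N\cdot O(n^k)=O((3n)^k k\log(n/\delta)/\eps^2)$, which is the stated bound.
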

\subsection{Bonami Lemma}
We will use the quantum version of Bonami Lemma~\cite{bonami1970etude} proved by Montanaro and Osborne~\cite[Corollary 8.9]{montanaro2008quantum}. 
\begin{theorem}\label{lem:BonamiLemma}
    Given a $k$-local Hamiltonian $H$ on $n$ qubits and $p\geq 2,$ it holds that
    $$\|H\|_p \le (p-1)^{\frac{k}{2}}\|H\|_2,$$
    where $\|H\|_p$ is defined as 
    $$\|H\|_p=\left(\frac{\Tr[|H|^p]}{2^n}\right)^{1/p}.$$ 
\end{theorem}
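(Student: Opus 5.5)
We prove the Bonami-type bound $\|H\|_p \le (p-1)^{k/2}\|H\|_2$ in the standard hypercontractive way: reduce to a hypercontractivity estimate for the qubit depolarizing semigroup, then use that a $k$-local operator has its Pauli weight concentrated on levels $\le k$. Write the Pauli expansion $H=\sum_{|P|\le k} h_P P$. For $\gamma\in[0,1]$ define the noise operator
$$T_\gamma(A)=\sum_{P}\gamma^{|P|}a_P P,$$
which is the $n$-fold tensor power of the single-qubit depolarizing channel $\mathcal D_\gamma(A)=\gamma A+(1-\gamma)\tfrac{\Tr[A]}{2}I$. Indeed, $\mathcal D_\gamma$ fixes $I$ and multiplies $X,Y,Z$ by $\gamma$.

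The key step, and the main obstacle, is the quantum hypercontractivity inequality
$$\|T_\gamma(A)\|_p\le \|A\|_2\qquad\text{for }\gamma\le (p-1)^{-1/2},$$
with norms normalized as in the statement. For the single-qubit case we would reduce, via the known $2\to p$ hypercontractivity of the qubit depolarizing channel (equivalently a log-Sobolev inequality for the single-qubit depolarizing semigroup, as established by King and by Montanaro--Osborne), to the classical two-point Bonami--Beckner inequality. The $n$-qubit extension is not automatic, because Schatten norms do not tensorize as trivially as classical $L^p$ norms. We would attempt it via multiplicativity of the $2\to p$ norm for tensor products of depolarizing channels; that multiplicativity is exactly the content of King's theorem for unital qubit channels. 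If that route fails, we would instead use the log-Sobolev constant of the tensor-product semigroup, which tensorizes, together with Gross's integration argument. Since the paper cites Montanaro--Osborne for this theorem, we treat this step as the imported input.

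Granting the hypercontractivity estimate, the rest is a short computation. Set $\gamma=(p-1)^{-1/2}$ and define $G=\sum_{|P|\le k}\gamma^{-|P|}h_P P$, so that $T_\gamma(G)=H$. Then
$$\|H\|_p=\|T_\gamma G\|_p\le \|G\|_2.$$
By Parseval,
$$\|G\|_2^2=\sum_{|P|\le k}\gamma^{-2|P|}h_P^2\le \gamma^{-2k}\sum_P h_P^2=(p-1)^k\|H\|_2^2,$$
where we used $\gamma^{-1}\ge1$ and $|P|\le k$. Taking square roots gives $\|H\|_p\le (p-1)^{k/2}\|H\|_2$. Note that $G$ need not be Hermitian-positive or anything special; it only needs to be an operator, which is why the hypercontractive inequality is needed for general $A$ and not just positive $A$.
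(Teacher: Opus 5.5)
Your proposal is correct and takes essentially the paper's route. The paper simply cites Montanaro--Osborne for this bound, and there it follows from quantum $2\to p$ hypercontractivity of the $n$-qubit noise operator by exactly your reweighting-plus-Parseval computation. Your $G=\sum_{|P|\le k}(p-1)^{|P|/2}h_PP$ is in fact Hermitian, since the $h_P$ are real, so the Hermitian version of hypercontractivity suffices. Only the imported hypercontractive step carries real content, and your suggested fallbacks for it are not free in the noncommutative setting; for example, quantum log-Sobolev constants do not tensorize automatically. Citing Montanaro--Osborne for that step, as you ultimately do, is the right call.
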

Note that $\|H\|_2 = \|H\|_F$.

\subsection{Inequalities of probabilities}
Here we list the well-known probability bounds that we will use. 

\begin{lemma}[Paley--Zygmund inequality]\label{lem:p-z}
Let $Z\ge 0$ be a random variable with $\mathds{E}[Z^2]<\infty$ and let $0\le \theta\le 1$. Then
\begin{equation*}
    \Pr\left[ Z > \theta\,\mathds{E}[Z] \right]
    \ge
    (1-\theta)^2 \cdot \frac{\mathds{E}[Z]^2}{\mathds{E}[Z^2]}.
\end{equation*}
\end{lemma}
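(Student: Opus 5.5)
This is the standard Paley--Zygmund inequality, and I will prove it by a Cauchy--Schwarz argument applied to $Z$ restricted to the event where it is large. Write $\mu=\mathds{E}[Z]$ and let $A$ be the event $\{Z>\theta\mu\}$. First dispose of the degenerate cases. If $\mathds{E}[Z^2]=0$ then $Z=0$ almost surely, so $\mu=0$; interpreting $0/0$ as $0$, the right-hand side vanishes and the claim is trivial. Likewise, if $\mu=0$ the right-hand side is $0$ and there is nothing to prove. So I will assume $0<\mu$ and $0<\mathds{E}[Z^2]<\infty$. Note that $\mathds{E}[Z^2]<\infty$ already forces $\mu<\infty$, by Cauchy--Schwarz or Jensen.

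The key step is to split the mean along $A$:
\begin{equation*}
\mu=\mathds{E}[Z\mathds{1}_{A^c}]+\mathds{E}[Z\mathds{1}_{A}]\le \theta\mu+\mathds{E}[Z\mathds{1}_A].
\end{equation*}
The inequality holds because on $A^c$ we have $0\le Z\le\theta\mu$; this is where nonnegativity of $Z$ is used. It follows that $\mathds{E}[Z\mathds{1}_A]\ge(1-\theta)\mu\ge 0$.

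Next I apply Cauchy--Schwarz to the product $Z\cdot\mathds{1}_A$:
\begin{equation*}
\mathds{E}[Z\mathds{1}_A]\le \sqrt{\mathds{E}[Z^2]}\,\sqrt{\Pr[A]}.
\end{equation*}
Both sides of the bound $(1-\theta)\mu\le\mathds{E}[Z\mathds{1}_A]$ are nonnegative, so I may square it and combine with the display above. This gives $(1-\theta)^2\mu^2\le \mathds{E}[Z^2]\Pr[A]$. Dividing by $\mathds{E}[Z^2]>0$ yields the stated bound.

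There is no real obstacle here. The only points needing care are the edge cases $\theta=1$ (where the bound is trivially $\ge 0$) and $\mu=0$. One should also make sure the squaring step uses the nonnegativity of $(1-\theta)\mu$, which holds since $0\le\theta\le 1$.
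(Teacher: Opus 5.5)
Your proof is correct and complete. It is the standard Cauchy--Schwarz argument: split $\mathds{E}[Z]$ over $\{Z>\theta\mu\}$ and its complement, then use $\mathds{E}[Z\mathds{1}_A]\le\sqrt{\mathds{E}[Z^2]\Pr[A]}$. The paper gives no proof of its own; it only lists this as a well-known bound and applies it with $Z=|F(r,s)|^2$ and $\theta=1/2$. Your strict-event version also covers the non-strict event $\{|F|^2\ge\tfrac12\mathds{E}|F|^2\}$ used there. (A minor point: since $Z\ge0$, the case $\mu=0$ already forces $\mathds{E}[Z^2]=0$, so your two degenerate cases coincide.)
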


\section{Hamiltonian certification via access to time-evolution}\label{sec:tolerant-cert}
In this section, we propose an algorithm that uses access to time-evolution to certify whether an unknown local Hamiltonian $H$ is close to or far to a known local Hamiltonian $H_0$. We introduce a protocol that is optimal for any constant $k$. The correctness of the protocol relies on a bound on the probability of measuring identity when performing Bell sampling. Such a bound is possible for $k$-local Hamiltonians thanks to the Bonami Lemma.

\paragraph{Upper bounding the probability of measuring identity in Bell sampling.}\label{sec:sufficient-conditions}
We recall that $\Delta H=H-H_0$ can be expressed as
\begin{equation}\label{eq:DiagHam}
    \Delta H = \sum_{s\in\{0,1\}^n} \lambda_s\ketbra{\psi_s}{\psi_s}, 
\end{equation}
where $\lambda_s$ and $ \ket{\psi_s}$ are the eigenvalues and eigenvectors of $\Delta H$. The probability of measuring identity when performing Bell sampling is given by \cite[Theorem 1]{sinha2025improvedhamiltonianlearningsparsity}
\begin{equation}
    I(t) = \frac{1}{4^n}\sum_{r,s\in \{0,1\}^n}\cos((\lambda_r-\lambda_s)t).\label{eq:ProbIdBellSampling}
\end{equation} Note that $I(t)$ only depends on eigenvalue differences $F(r,s):=\lambda_r-\lambda_s$ of $\Delta H$. In particular, to analyze $I(t)$, we consider the proportion of $\varepsilon$-separated eigenvalue pairs:
\begin{equation*}
    \Lambda(\Delta H, \varepsilon) := \frac{1}{4^n} \sum_{r,s:|F(r,s)| \ge \varepsilon} 1.
\end{equation*}
First, we will show that a lower bound in $\Lambda$ implies an upper bound (smaller than 1) for $I(t)$ with high probability over some random $t.$ Second, we will prove that if the Hamiltonians are far away from each other, one can lower bound $\Lambda,$ and thus upper bound $I(t)$ away from 1. This will be enough for certification, because in the case that $H$ and $H_0$ are close, we will prove a lower bound of $I(t)$ for every $t.$

\begin{lemma}[Bounding $I(t)$ in terms of $\Lambda$]\label{lem:spectral-condition}
Fix $\varepsilon>0$. Let $\Delta H$ be an $n$-qubit Hamiltonian and $\Lambda := \Lambda(\Delta H,\varepsilon)$.
Then, if we pick $t\in [0,2/\eps]$ uniformly at random, with probability at least 1/3, we have that 
\begin{equation*}
    I(t) \le 1 - \frac{\Lambda}{4}
\end{equation*}
\end{lemma}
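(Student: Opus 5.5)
The plan is to first reduce to the $\varepsilon$-separated pairs. Then I would average over the random time $t$, and finally convert the average bound into a bound that holds with constant probability via a reverse Markov argument. The last step is where the stated constant $1/3$ is not yet secured.

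\emph{Step 1 (reduction).} From \cref{eq:ProbIdBellSampling}, write
\begin{equation*}
    1-I(t)=\frac{1}{4^n}\sum_{r,s}\bigl(1-\cos(F(r,s)t)\bigr)\;\ge\; Y(t):=\frac{1}{4^n}\sum_{r,s:\,|F(r,s)|\ge\varepsilon}\bigl(1-\cos(F(r,s)t)\bigr).
\end{equation*}
This holds because every summand $1-\cos(\cdot)$ is nonnegative, so dropping the pairs with $|F|<\varepsilon$ can only decrease the sum. Since each retained summand lies in $[0,2]$, we have $0\le Y(t)\le 2\Lambda$. It therefore suffices to show $\Pr_t[Y(t)\ge \Lambda/4]\ge 1/3$.

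\emph{Step 2 (average over $t$).} For $t$ uniform on $[0,2/\varepsilon]$ and a pair with $|F|\ge\varepsilon$, put $x=2|F|/\varepsilon\ge 2$. A direct computation gives
\begin{equation*}
    \mathbb{E}_t\bigl[1-\cos(Ft)\bigr]=1-\frac{\sin x}{x}.
\end{equation*}
For $x\ge 2$ one has $\sin x/x\le \sin 2/2<0.46$: on $[2,\pi]$ the function $\sin x/x$ is decreasing, and beyond $\pi$ it is at most $1/x<0.32$. Hence each separated pair contributes at least $0.54$ on average. Summing over pairs gives $\mathbb{E}_t[Y]\ge 0.54\,\Lambda$.

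\emph{Step 3 (from mean to probability).} Since $Y\le 2\Lambda$, the reverse Markov inequality gives
\begin{equation*}
    \Pr[Y\ge\Lambda/4]\ge\frac{\mathbb{E}[Y]-\Lambda/4}{2\Lambda-\Lambda/4}\ge\frac{0.29}{1.75}\approx 0.17.
\end{equation*}
This argument already yields a version of the lemma with constant $\approx 1/6$ in place of $1/3$. That weaker version would suffice for the algorithm after adjusting the number of repetitions.

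\emph{Main obstacle: reaching the constant $1/3$.} Reverse Markov is lossy here, because it is tight only if $Y$ concentrates on $\{0,2\Lambda\}$. The one-frequency case shows there is slack. If all separated pairs have $|F|=\varepsilon$, then with $u=\varepsilon t$ uniform on $[0,2]$ we need $\cos u\le 3/4$, which occurs with probability about $0.64$.

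The first thing I would try is to argue per pair rather than through the aggregate mean. For each separated pair, lower bound the fraction of $t\in[0,2/\varepsilon]$ on which $1-\cos(Ft)\ge c$, for a well-chosen threshold $c$. Then use the fact that $Y/\Lambda$ is an average of $[0,2]$-valued functions of $t$ with all frequencies at least $\varepsilon$. The goal would be to show that the worst case over frequency mixtures is the single-frequency case $x=2$.

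If that fails, a fallback is Paley--Zygmund (\cref{lem:p-z}) applied to $Y$, bounding $\mathbb{E}[Y^2]$ through the cross terms $\mathbb{E}_t[\cos(F t)\cos(F't)]$. Alternatively, one can simply settle for the constant from Step 3, which changes only constants downstream.
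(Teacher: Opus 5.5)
Steps 1 and 2 are the paper's skeleton. Your $Y(t)$ equals $\Lambda\bigl(1-I_{\ge\varepsilon}(t)\bigr)$ in the paper's notation, so Step 1 is the paper's bound $I(t)\le(1-\Lambda)+\Lambda I_{\ge\varepsilon}(t)$. Step 2 is its per-pair time average, with the sharper constant $\sin 2/2$ in place of $1/2$. Step 3 is correct, but, as you say, it only gives probability about $0.17$, and the gap is real. Knowing only $0\le Y\le 2\Lambda$ and a lower bound on $\mathbb{E}_tY$, reverse Markov is tight, so no sharpening of Step 2 reaches $1/3$. Neither fallback is a proof yet. The per-pair route still has to pass from individual pairs to their average, which is the same difficulty. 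The claim that a single frequency is the worst mixture is exactly what needs an argument.

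The paper does not close this gap either. It obtains $1/3$ by applying Markov's inequality to $I_{\ge\varepsilon}(t)$ itself, using $\mathbb{E}_t[I_{\ge\varepsilon}(t)]\le 1/2$. But $I_{\ge\varepsilon}(t)$ is an average of cosines and can be negative: for $\Delta H=\varepsilon Z_1$ it equals $\cos(2\varepsilon t)$. For $[-1,1]$-valued variables the inference fails; take $\pm1$ with probabilities $3/4$ and $1/4$. Your nonnegative reformulation is the correct version of that first-moment step, and it shows the step alone gives only about $1/6$.

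The missing idea is structural, not probabilistic. Set $u=\varepsilon t\in[0,2]$ and $g(u)=I_{\ge\varepsilon}(u/\varepsilon)$. Let $\mathbb{E}_{\mathrm{sep}}$ denote the uniform average over separated pairs, and define $D(u)=\bigl(\mathbb{E}_{\mathrm{sep}}\lvert e^{iF(r,s)u/\varepsilon}-1\rvert^2\bigr)^{1/2}=\sqrt{2(1-g(u))}$. Minkowski's inequality and $\lvert e^{i\theta}\rvert=1$ give $D(u+v)\le D(u)+D(v)$ and $D(-v)=D(v)$. Let $S=\{u\in[0,2]:g(u)>3/4\}=\{D<1/\sqrt2\}$, with $|\cdot|$ denoting Lebesgue measure.

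Suppose first that some $u_0\in[0,2]$ has $g(u_0)\le 0$. Then $D(u_0)\ge\sqrt2$, so $D(u_0\pm v)\ge D(u_0)-D(v)>1/\sqrt2$ for every $v\in S$. Hence $u_0-(S\cap[0,u_0])\subseteq[0,u_0]\setminus S$ and $u_0+(S\cap[0,2-u_0])\subseteq[u_0,2]\setminus S$. These sets have measures at least $|S|-(2-u_0)$ and $|S|-u_0$ and are essentially disjoint, so $2|S|-2\le 2-|S|$.

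Otherwise $g>0$ on $[0,2]$, and $\tfrac34|S|\le\int_0^2 g\,du=2\,\mathbb{E}_t[I_{\ge\varepsilon}(t)]\le 1$.

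Either way $|S|\le 4/3$, i.e.\ $\Pr_t[I_{\ge\varepsilon}(t)\le 3/4]\ge 1/3$, and your Step 1 then gives $I(t)\le 1-\Lambda/4$ on that event. Your remark that a weaker constant would suffice for \cref{thm:klocalcertificationdiagonal} after enlarging $N$ is also correct.
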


\begin{proof}
Let $t$ be uniformly random in $[0,2/\varepsilon]$. Let $\{\lambda_s\}$ be as in \cref{eq:DiagHam}, and recall that $F(r,s)=\lambda_r-\lambda_s$. Define $I_{\ge \varepsilon}(t):= \mathds{E}_{r,s:|F(r,s)|\ge \varepsilon}\left[ \cos(F(r,s) t) \right]$, where the expectation is with respect to the uniform distribution over pairs $(r,s)$ satisfying $|F(r,s)|\ge \varepsilon$.
Then,
\begin{equation}\label{eq:upboundonIt}
    I(t) \le (1-\Lambda) + \Lambda\, I_{\ge \varepsilon}(t).
\end{equation}
For any pair $(r,s)$ with $|\lambda_r-\lambda_s| \ge \varepsilon$, the average of $\cos(F(r,s) t)$ over $t \sim \mathrm{Unif}[0,2/\varepsilon]$ is
\begin{equation*}
    \mathds{E}_t\left[\cos(F(r,s) t)\right]
        = \frac{\varepsilon}{2}\cdot\frac{\sin(2F(r,s)/\varepsilon)}{F(r,s)},
\end{equation*}
which satisfies
\begin{equation*}
    \left|\mathds{E}_t[\cos(F(r,s) t)]\right|
        \le \frac{\varepsilon}{2|F(r,s)|}
        \le \frac{1}{2},
\end{equation*}
since $|F(r,s)| \ge \varepsilon$. In particular, $\mathbb E_t[I_{\ge \eps}(t)]\leq 1/2$, so $$\Pr_t[I_{\geq \eps}(t)\geq 3/4]\leq 2/3.$$
In other words, $\Pr_t[I_{\geq \eps}(t)\leq 3/4]\geq 1/3.$ By \cref{eq:upboundonIt} this implies that $\Pr_t[I(t)\leq 1-\Lambda/4]\geq 1/3.$
\end{proof}

Next, we show that a lower bound on $\norm{\Delta H}_F$ implies a lower bound on $\Lambda.$ Crucially, for this we use the locality assumption. Without this assumption, it may happen that all eigenvalues of $\Delta H$ are equal, and thus $\Lambda(\Delta H,\eps)=0$ for every $\eps>0,$ even when $\Delta H\neq 0$. However, thanks to the Bonami Lemma, we can ensure that the spectrum of $\Delta H$ for local Hamiltonians is well-spread. 

\begin{lemma}\label{lem:boundonGamma}
    Let $\Delta H$ be a $k$-local Hamiltonian. Then, $$\Lambda(\Delta H,\norm{\Delta H}_F)\ge\frac{1}{3\cdot 9^{k}}.$$
\end{lemma}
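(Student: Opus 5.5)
We view $(r,s)$ as uniformly random over $\{0,1\}^n\times\{0,1\}^n$ and set $Z:=F(r,s)^2=(\lambda_r-\lambda_s)^2\ge 0$, so that $\Lambda(\Delta H,\norm{\Delta H}_F)=\Pr[Z\ge \norm{\Delta H}_F^2]$. The plan is to compute the first two moments of $Z$ in terms of the Schatten-type norms of $\Delta H$. We then bound the fourth moment using \cref{lem:BonamiLemma}, and conclude with the Paley--Zygmund inequality (\cref{lem:p-z}) at $\theta=1/2$.

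\textbf{Computing the moments.} Since $\Delta H$ is traceless, $\sum_s\lambda_s=0$, i.e.\ $\mathds{E}_s[\lambda_s]=0$. The odd moment $\mathds{E}[\lambda]$ therefore vanishes, and expanding the square gives
\begin{equation*}
\mathds{E}[Z]=2\,\mathds{E}_s[\lambda_s^2]=2\norm{\Delta H}_2^2=2\norm{\Delta H}_F^2 .
\end{equation*}
For the fourth power we expand $(\lambda_r-\lambda_s)^4$ by the binomial theorem and use independence of $r$ and $s$. The cross terms with an odd power of a single variable vanish, leaving
\begin{equation*}
\mathds{E}[Z^2]=2\,\mathds{E}[\lambda^4]+6\,\mathds{E}[\lambda^2]^2=2\norm{\Delta H}_4^4+6\norm{\Delta H}_2^4 .
\end{equation*}
Here tracelessness is used, which holds because both $H$ and $H_0$ are traceless by convention. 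If one did not want to rely on this, one could instead bound $\mathds{E}[Z^2]\le 16\,\mathds{E}[\lambda^4]$ at the cost of a worse constant, but the paper's convention makes the exact computation available.

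\textbf{Hypercontractivity and Paley--Zygmund.} Since $\Delta H$ is $k$-local, \cref{lem:BonamiLemma} with $p=4$ gives $\norm{\Delta H}_4^4\le 9^k\norm{\Delta H}_2^4$. Hence
\begin{equation*}
\mathds{E}[Z^2]\le (2\cdot 9^k+6)\norm{\Delta H}_2^4\le 8\cdot 9^k\norm{\Delta H}_2^4 .
\end{equation*}
Applying \cref{lem:p-z} with $\theta=1/2$ yields
\begin{equation*}
\Pr\bigl[Z>\norm{\Delta H}_F^2\bigr]\ge \frac14\cdot\frac{4\norm{\Delta H}_F^4}{8\cdot 9^k\norm{\Delta H}_F^4}=\frac{1}{8\cdot 9^k}.
\end{equation*}
This is off from the claimed $\frac{1}{3\cdot 9^k}$ only by a constant factor.

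\textbf{Main obstacle: the constant.} To reach the stated constant, I would optimize $\theta$ or use the sharper Paley--Zygmund form $\Pr[Z>\theta\mathds{E}Z]\ge \frac{(1-\theta)^2\mathds{E}[Z]^2}{\mathrm{Var}(Z)+(1-\theta)^2\mathds{E}[Z]^2}$. Here $\mathrm{Var}(Z)=\mathds{E}[Z^2]-\mathds{E}[Z]^2\le (2\cdot 9^k+2)\norm{\Delta H}_F^4$. With $\theta=1/2$ this gives
\begin{equation*}
\frac{1}{1+(2\cdot 9^k+2)}=\frac{1}{2\cdot 9^k+3}\ge\frac{1}{3\cdot 9^k}
\end{equation*}
for $k\ge1$. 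The case $k=0$ is trivial, since then $\Delta H=0$ under tracelessness. The moment computation itself is routine; the only delicate point is this constant bookkeeping.
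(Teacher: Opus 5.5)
Your proof is correct and is essentially the paper's argument (second and fourth moments of $F^2$, Bonami at $p=4$, Paley--Zygmund at $\theta=1/2$). The constant ``obstacle'' is self-inflicted: if you keep $\mathds{E}[Z^2]\le(2\cdot 9^k+6)\norm{\Delta H}_F^4$ instead of rounding up to $8\cdot 9^k$, the standard inequality already gives $\frac{1}{2(9^k+3)}\ge\frac{1}{3\cdot 9^k}$ for $k\ge 1$, which is exactly how the paper concludes, while your variance form gives the marginally better $\frac{1}{2\cdot 9^k+3}$. One correction to your aside: tracelessness cannot be traded for a worse constant, because it is what gives $\mathds{E}[Z]=2\norm{\Delta H}_F^2$, and for $\Delta H=cI$ with $c\neq 0$ the statement is false.
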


\begin{proof}
    We start by noting that we can recast $\Lambda(\Delta H,\norm{\Delta H}_F)$ as $$\Pr_{r,s}[|F(r,s)|\geq \norm{\Delta H}_F]=\Pr_{r,s}[|F(r,s)|^2\geq \norm{\Delta H}_F^2],$$ where the probability is over uniformly random $r,s$. Thus, we can lower bound $\Lambda$ as
    \begin{equation}
        \Lambda(\Delta H,\|\Delta H\|_F)=\Pr_{r,s}[|F(r,s)|^2\geq \frac{1}{2}\mathbb E_{r,s}[|F(r,s)|^2]]\geq \frac{1}{4}\frac{\mathbb E_{r,s}[|F(r,s)|^2]^2}{\mathbb E_{r,s}[|F(r,s)|^4]},\label{eq:applicationOfPaleyZygmund}
    \end{equation}
    where we have used the Paley--Zygmund inequality (see \cref{lem:p-z}) with $Z=F$ and $\theta=1/2$. With simple calculations we deduce that 
    $$\mathbb E_{r,s}[|F(r,s)|^2]=2\mathbb{E}_s[\lambda_s^2]=2\|\Delta H\|_2^2 = 2\|\Delta H\|_F^2,$$ and
    $$\mathbb E_{r,s}[|F(r,s)|^4]=2\mathbb{E}_s[\lambda_s^4]+6\mathbb{E}_s[\lambda_s^2]^2=2\|\Delta H\|_4^4+6\|\Delta H\|_2^4,$$
    because $\sum_s\lambda_s=\Tr[\Delta H]=0$. 
    
    Now, we use that $\Delta H$ is $k$-local, so we can apply the Bonami Lemma (see \cref{lem:BonamiLemma}) to $\Delta H$, yielding 
    \begin{equation*}
        \|\Delta H\|_4^4 \le 9^k\|\Delta H\|_2^4.
    \end{equation*}
    Hence, we have
    $$\mathbb E_{r,s}[|F(r,s)|^4]=2\|\Delta H\|_4^4+6\|\Delta H\|_2^4 \le 2(9^k+3)\|\Delta H\|_2^4 = \frac{9^k+3}{2}\mathbb E_{r,s}[|F(r,s)|^2]^2.$$
    This, together with \cref{eq:applicationOfPaleyZygmund}, implies
    \begin{equation*}
        \Lambda(H,\|\Delta H\|_F)=\Pr_{r,s}[|F(r,s)|^2\geq \norm{\Delta H}_F^2]\geq \frac{1}{2(9^k+3)} \ge\frac{1}{3\cdot 9^{k}},
    \end{equation*}
    as claimed.
\end{proof}

Putting \cref{lem:spectral-condition,lem:boundonGamma} together, we can conclude that, if $\norm{\Delta H}_F\geq \eps,$ picking $t\in [0,2/\eps]$ uniformly at random, with probability at least 1/3, we have that
\begin{equation*}
    I(t) \le 1 - \frac{1}{12\cdot 9^k}.
\end{equation*}

\paragraph{Lower bounding the probability of measuring identity in Bell sampling.} The probability of measuring identity when performing Bell sampling, i.e. $I(t)$, can be expressed as 
\begin{equation*}
    I(t) = \frac{1}{4^n}\sum_{r,s\in \{0,1\}^n}\cos((\lambda_r-\lambda_s)t) = \frac{1}{4^n}|\text{Tr}(e^{-i\Delta Ht})|^2. 
\end{equation*}
By the definition of the normalized Frobenius norm, we have
\[
\|e^{-i\Delta Ht}-I\|_{F}^2=2-\frac{2}{2^n}\operatorname{Re}\operatorname{Tr}(e^{-i\Delta Ht}).
\]
So we can lower bound $I(t)$ as:
\begin{equation}\label{eq:identity-probability-frobenius}
    I(t) =\frac{1}{4^n}|\text{Tr}(e^{-i\Delta Ht})|^2 \ge (\frac{1}{2^n}\operatorname{Re}\operatorname{Tr}(e^{-i\Delta Ht}))^2 = (1-\frac{1}{2}\|e^{-i\Delta Ht}-I\|_{F}^2)^2 \ge 1-\|e^{-i\Delta Ht}-I\|_F^2.
\end{equation}

Now, we can upper bound $\|e^{-i\Delta Ht}-I\|_F$ using the following simple lemma. 

\begin{lemma}\label{lem:duhamel}
Let $\Delta H$ be an $n$-qubit Hamiltonian and $t>0$. Then, 
\begin{equation*}
    \norm{e^{-it\Delta H}-I^{\otimes n}}_F\leq t\norm{\Delta H}_F.
\end{equation*}
\end{lemma}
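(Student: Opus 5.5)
Two routes are available. The first is to diagonalize, as the paper suggests, and the second is a Duhamel-type integral representation. I will use the spectral route as the main proof and keep the integral as a sanity check.

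Write $\Delta H=\sum_s \lambda_s \ketbra{\psi_s}{\psi_s}$ as in \cref{eq:DiagHam}. Then $e^{-it\Delta H}-I^{\otimes n}=\sum_s (e^{-it\lambda_s}-1)\ketbra{\psi_s}{\psi_s}$ is normal. Its normalized Frobenius norm squared is therefore
\[
\norm{e^{-it\Delta H}-I^{\otimes n}}_F^2=\frac{1}{2^n}\sum_s |e^{-it\lambda_s}-1|^2 .
\]
The key scalar inequality is $|e^{-ix}-1|=2|\sin(x/2)|\le |x|$ for real $x$, which follows from $|\sin y|\le |y|$. Applying it with $x=t\lambda_s$ gives
\[
\frac{1}{2^n}\sum_s |e^{-it\lambda_s}-1|^2\le \frac{t^2}{2^n}\sum_s \lambda_s^2=t^2\norm{\Delta H}_F^2 .
\]
Taking square roots concludes the proof.

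For the integral route, write $e^{-it\Delta H}-I=-i\int_0^t \Delta H\, e^{-is\Delta H}\,ds$. The triangle inequality for the norm of a Bochner integral then gives the bound, since multiplication by a unitary preserves the normalized Frobenius norm by unitary invariance.

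There is essentially no obstacle in either route. The only point needing care is that we use the normalized Frobenius norm consistently, so that the $2^{-n}$ factors match on both sides. The spectral argument additionally uses that $\Delta H$ is Hermitian, hence unitarily diagonalizable, which holds because $H$ and $H_0$ are Hamiltonians.
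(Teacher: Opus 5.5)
Your proof is correct, and your main argument differs from the paper's. The paper's proof is exactly your ``sanity check'': it writes $e^{-i\Delta H t}-I^{\otimes n}=\int_0^t -i\Delta H e^{-i\Delta H s}\,ds$ by the fundamental theorem of calculus, moves the norm inside the integral, and uses unitary invariance to get $\norm{\Delta H e^{-i\Delta H s}}_F=\norm{\Delta H}_F$. Your spectral route instead reduces everything to the scalar inequality $|e^{-ix}-1|=2|\sin(x/2)|\le|x|$, applied eigenvalue by eigenvalue.

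What the spectral route buys is an exact identity, $\norm{e^{-it\Delta H}-I^{\otimes n}}_F^2=\frac{4}{2^n}\sum_s\sin^2(t\lambda_s/2)$. This shows the bound is tight to leading order as $t\to 0$, and it fits naturally with the eigenvalue-based analysis of $I(t)$ in the same section. It also needs no calculus, only that $\Delta H$ is Hermitian.

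The paper's integral route is basis-free and works verbatim for any norm invariant under multiplication by unitaries. It is also the Duhamel-type form that generalizes to comparing two different evolutions, where no common eigenbasis exists: $e^{-itA}-e^{-itB}=-i\int_0^t e^{-i(t-s)A}(A-B)e^{-isB}\,ds$.
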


\begin{proof}
By the fundamental theorem of calculus we have that 
\begin{equation*}
    \int_0^t-i\Delta He^{-i\Delta H s}ds=e^{-i\Delta Ht}-I^{\otimes n}.
\end{equation*}

Finally, by the fact that the normalized Frobenius norm is invariant under unitaries transformations, we have that $\norm{\Delta He^{-i\Delta H s}}_F=\norm{\Delta H}_F,$ so 
\begin{equation*}
    \norm{e^{-i\Delta Ht}-I^{\otimes n}}_F=\norm{\int_0^t-i\Delta He^{-i\Delta H s}ds}_F\leq \int_0^t\norm{\Delta H}_Fds=t\norm{\Delta H}_F.
\end{equation*}
\end{proof}

Hence, using \Cref{eq:identity-probability-frobenius,lem:duhamel}, for every $t>0$, the following holds:
\begin{equation}\label{lem:lowerboundI(t)}
    I(t) \ge 1-t^2 \|\Delta H\|_F^2.
\end{equation}

\paragraph{The algorithm.} Now, we are ready to present our Hamiltonian certification algorithm, \Cref{algo:tolerantdiagonalcertification}, and analyze its guarantees. 

\begin{algorithm}[h!]
\caption{Tolerant certification of $k$-local Hamiltonians}\label{algo:tolerantdiagonalcertification}
\begin{algorithmic}[1]
\Require Parameters $\eps>0$, $k\in \N$, $H_0$, and time evolution access to $H$ 
\State Set $N$ to be a large enough constant.
\For{$i\in [N]$}
\State Sample $t\in [0,2/\eps]$ uniformly at random.
\State Implement unitary $V$ from \Cref{theo:trotterization}, with $\eps_{\operatorname{Trott}}=\tfrac{1}{384\cdot 9^k}.$
\State Use the algorithm of \cref{lem:memorylessPaulisampling} to obtain $|v_{I^{\otimes n}}'|^2$, that, with probability $\geq 0.95/N$, is an $\tfrac{1}{192\cdot 9^k}$-estimate of $|v_{I^{\otimes n}}|^2$. 
\If{$|v_{I^{\otimes n}}'|^2 \leq 1 - \tfrac{7}{96\cdot 9^k }$}
  \State \Return ``FAR'' 
\EndIf
\EndFor
\State \Return ``CLOSE''
\end{algorithmic}

\end{algorithm}

\begin{theorem}[Tolerant certification of $k$-local Hamiltonians]\label{thm:klocalcertificationdiagonal}
    Let $H$ and $H_0$ be $k$-local $n$-qubit Hamiltonians, where $H_0$ is known and $H$ can be accessed via its time evolution operator. Let $\eps,\delta>0.$ Let $C_{\op}\geq 1$ be such that $\norm{H_0}_{\op},\norm{H}_{\op}\leq C_{\op}$. There is an algorithm that uses $O(9^{2k}\log(1/\delta)/\eps)$ total evolution time to test, with success probability $\ge 1-\delta,$ whether $\|H-H_0\|\le \eps/(8\cdot3^k)$ or $\norm{H-H_0}_{F} \geq \eps.$

    Furthermore, the algorithm uses no ancilla qubits, it makes $O(9^{2k}\log(1/\delta))$ experiments, and it makes $O(3^{5k}(C_{\op}/\eps)^{3/2}\log(1/\delta))$ time evolution queries. The algorithm is robust to at least $\Omega(9^{-k})$ SPAM errors, and the classical post-processing time is $O(9^{2k}\log(1/\delta))$.
\end{theorem}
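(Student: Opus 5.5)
The plan is to analyze \Cref{algo:tolerantdiagonalcertification} directly, treating the CLOSE and FAR cases separately and then bookkeeping the resources. Let $V$ be the Trotterized unitary of \Cref{theo:trotterization} with $\eps_{\operatorname{Trott}}=\tfrac{1}{384\cdot 9^k}$. The first step is to relate $|v_{I^{\otimes n}}|^2$ to $I(t)=|u_{I^{\otimes n}}|^2$, where $u_{I^{\otimes n}}=\Tr[e^{-it\Delta H}]/2^n$. Since $|v_{I^{\otimes n}}-u_{I^{\otimes n}}|\le \norm{V-e^{-it\Delta H}}_{\op}\le\eps_{\operatorname{Trott}}$ and both coefficients have modulus at most $1$, we get $\big||v_{I^{\otimes n}}|^2-I(t)\big|\le 2\eps_{\operatorname{Trott}}=\tfrac{1}{192\cdot 9^k}$. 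Adding the estimation error $\tfrac{1}{192\cdot9^k}$ from \cref{lem:memorylessPaulisampling}, the estimate $|v'_{I^{\otimes n}}|^2$ is within $\tfrac{2}{192\cdot 9^k}=\tfrac{1}{96\cdot 9^k}$ of $I(t)$ in each round, on the good event for that round. A union bound over the $N$ rounds, with per-round failure probability $O(1/N)$, makes the good events hold simultaneously with constant probability.

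In the CLOSE case, $\norm{\Delta H}_F\le\eps/(8\cdot 3^k)$ and $t\le 2/\eps$, so \cref{lem:lowerboundI(t)} gives
\[
I(t)\ge 1-\frac{4}{\eps^2}\cdot\frac{\eps^2}{64\cdot 9^k}=1-\frac{6}{96\cdot 9^k}
\]
for every $t$. Hence the estimate is at least $1-\tfrac{7}{96\cdot 9^k}$, and I have to check that the threshold is set so the algorithm never outputs FAR. The inequality is tight or non-strict at the boundary, so I would shrink constants slightly if needed, for instance by using a strict inequality in the test. In the FAR case, $\norm{\Delta H}_F\ge\eps$. Because $\Lambda(\Delta H,\cdot)$ is nonincreasing in its threshold, $\Lambda(\Delta H,\eps)\ge\Lambda(\Delta H,\norm{\Delta H}_F)\ge \tfrac{1}{3\cdot 9^k}$ by \cref{lem:boundonGamma}. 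Then \cref{lem:spectral-condition} says that each round has probability at least $1/3$ of drawing $t$ with
\[
I(t)\le 1-\frac{1}{12\cdot 9^k}=1-\frac{8}{96\cdot 9^k}.
\]
In such a round the estimate is at most $1-\tfrac{7}{96\cdot 9^k}$, so the algorithm returns FAR. The rounds are independent, so the probability that no round draws a good $t$ is at most $(2/3)^N$. Taking $N=O(\log(1/\delta))$, and boosting each estimate to failure probability $\delta/(2N)$ (which costs a $\log(N/\delta)$ factor that can be absorbed or handled by the usual median amplification), gives success probability $1-\delta$.

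For the resources, each round invokes \cref{lem:memorylessPaulisampling} with accuracy $\Theta(9^{-k})$, which costs $O(9^{2k})$ experiments, up to the log factor from the confidence. Each experiment uses a single $V$ with total evolution time $t\le 2/\eps$, since Trotterization preserves total evolution time. This gives $O(9^{2k}\log(1/\delta)/\eps)$ total evolution time and $O(9^{2k}\log(1/\delta))$ experiments and post-processing. The query count is $l=O(\sqrt{(C_{\op}t)^3/\eps_{\operatorname{Trott}}})=O(3^{k}(C_{\op}/\eps)^{3/2})$ per experiment. Multiplying by the number of experiments gives $O(3^{5k}(C_{\op}/\eps)^{3/2}\log(1/\delta))$. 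SPAM robustness of order $\Omega(9^{-k})$ follows directly from the robustness clause of \cref{lem:memorylessPaulisampling}, and no ancillas are needed. The main obstacle is the constant bookkeeping: the Trotter error, the estimation error and the SPAM error must all fit inside the gap between $1-\tfrac{6}{96\cdot9^k}$ and $1-\tfrac{8}{96\cdot 9^k}$. If the stated constants are too tight, I would halve $\eps_{\operatorname{Trott}}$ and the estimation accuracy, which changes the complexities only by constant factors.
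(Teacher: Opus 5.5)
Your proposal is correct and follows the paper's proof essentially step for step: the same combined Trotter-plus-estimation error of $\tfrac{1}{96\cdot 9^k}$, the bound $I(t)\ge 1-t^2\norm{\Delta H}_F^2$ for the close case, \cref{lem:boundonGamma} together with \cref{lem:spectral-condition} for the far case, and the same resource count.

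Two small points are worth tightening. First, the paper fixes $N$ constant, proves success probability $0.9$, and then takes a majority vote over $O(\log(1/\delta))$ runs. That is what yields exactly $O(9^{2k}\log(1/\delta))$; your first option of boosting every estimate to failure $\delta/(2N)$ would cost an extra logarithmic factor. Second, you are right that the close-case bound lands exactly on the threshold, and the paper's strict inequality there is not fully justified. The fix is to halve the error parameters, as you propose; simply making the test strict would only move the tie to the far case.
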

\begin{proof}
    We will show that \cref{algo:tolerantdiagonalcertification} solves the problem for $\delta=0.1,$ and the statement for general $\delta$ follows via a standard majority vote argument. First, we note that by the Trotterization of \cref{theo:trotterization}, for $U=e^{-it\Delta H}$,
\begin{align*}
|v_{I^{\otimes n}}-u_{I^{\otimes n}}|=\frac{1}{2^n}\Big|\Tr(I^{\otimes n}[U-V])\Big|\le \|U-V\|_{\op}\le \eps_{\operatorname{Trott}}\, ,
\end{align*}
where $V$ is defined in \cref{theo:trotterization}.
In particular, as $|u_{I^{\otimes n}}|,|v_{I^{\otimes n}}|\leq 1,$ we have that 
\begin{align*}
||v_{I^{\otimes n}}|^2-|u_{I^{\otimes n}}|^2|=||v_{I^{\otimes n}}|+|u_{I^{\otimes n}}||\cdot||v_{I^{\otimes n}}|-|u_{I^{\otimes n}}||\le 2\eps_{\operatorname{Trott}}\ .
\end{align*}
Now, we note that all the estimates $|v'_{I^\otimes n}|^2$ of $|v_{I^{\otimes n}}|^2$ coming from \cref{lem:memorylessPaulisampling} are a $1/(192\cdot9^k)$-accurate estimate with probability $\geq 0.95$. Hence, taking the $1/(384\cdot9^k)$ Trotter error into account, we have that, with probability $\geq 0.95,$
\begin{equation}
    ||v'_{I^{\otimes n}}|^2-I(t)|\leq \frac{1}{96\cdot9^k},\label{eq:auxdiagonal}
\end{equation}
where we used that $I(t)=|u_{I^{\otimes n}}|^2.$ From this estimate, we show correctness and then perform a complexity analysis.

\textbf{The far case.} Suppose that $\|\Delta H\|_F \ge \epsilon$. By \cref{lem:spectral-condition,lem:boundonGamma}, with probability $\geq 0.95,$ at least for one of the sampled $t\in [0,2/\eps]$, we have $I(t)\leq 1-1/(12\cdot 9^k)$. Hence, by \cref{eq:auxdiagonal} we have that 
$$|v'_{I^{\otimes n}}|^2 < 1-\frac{1}{12\cdot 9^k}+\frac{1}{96\cdot 9^k}\leq 1-\frac{7}{96\cdot 9^k}.$$
Thus, in this case \cref{algo:tolerantdiagonalcertification} outputs FAR with probability $\geq 0.9$, as desired.

\textbf{The close case.} Suppose that $\|\Delta H\|_F\le \epsilon/(8\cdot 3^k)$. Then, by the lower bound of measuring identity in Bell sampling (see \cref{lem:lowerboundI(t)}) and \cref{eq:auxdiagonal}, we have
$$|v'_{I^{\otimes n}}|^2 > 1-t^2\|\Delta H\|_F^2-\frac{1}{96\cdot 9^k} \geq 1-\frac{4}{\eps^2}\cdot\frac{\eps^2}{64\cdot 9^k}-\frac{1}{96\cdot 9^k} \geq 1-\frac{7}{96\cdot 9^k}.$$
Thus, in this case \cref{algo:tolerantdiagonalcertification} outputs CLOSE with probability $\geq 0.95\geq 0.9$, as desired.

\textbf{Complexity analysis.}
The number of experiments of every iteration of the FOR loop of \cref{algo:tolerantdiagonalcertification}, thanks to \cref{lem:memorylessPaulisampling}, is $O(9^{2k})$. Thus, the total number of experiments is is $O(9^{2k})$, and as each experiments evolves the Hamiltonian for time at most $2/\eps$, we have that the total time evolution is $O( 9^{2k}/\eps)$. By virtue of \cref{theo:trotterization}, for every experiment we need to make $O(3^k(C_{\op}/\eps)^{3/2})$ queries to the time evolution operator, so in total we make $O(3^{5k}(C_{\op}/\eps)^{3/2})$. The robustness to SPAM errors and the classical post-processing time follows from \cref{lem:memorylessPaulisampling}.
\end{proof}

\section{Learning and certifying Gibbs states}
\subsection{Learning Gibbs states}
In this section we propose a fully-sample-efficient protocol to learn Gibbs states, i.e., an algorithm whose sample-complexity is at most polynomial in all relevant parameters. We follow a strategy that reduces learning to hypothesis selection \cite{Yatracos1985Jun,buadescu2021improved}.
First, we show that the following set is an $\eps$-covering net for the set of Gibbs states coming from a $k$-local Hamiltonian with bounded Pauli coefficients:
\begin{equation}
    \mathcal S_{\eps,k,n,\beta}=\left\{ e^{-\beta H}/\Tr[e^{-\beta H}]\ :\ H\in \mathcal H_{\eps,k,n,\beta} \right\},
\end{equation}
where 
\begin{equation*}
    \mathcal H_{\eps,k,n,\beta}=\left\{H: H=\sum_{|P|\leq k} h_{P}P, \; h_{P}\in \eta\mathbb{Z}\cap [-1,1]\right\}
\end{equation*}
and $\eta=\eta_{\eps,k,n,\beta} = \eps/(200\beta n^k)$.
\begin{lemma}\label{lem:net}
    Let $H$ be a $k$-local Hamiltonian acting on $n$ qubits with $|h_P|\leq 1$ for every $P\in\{I,X,Y,Z\}^{\otimes n}$. Then, there exists $\rho\in \mathcal S_{\eps,k,n,\beta}$ such that $\norm{\rho(\beta)-\rho}_{\tr}\leq \eps.$
\end{lemma}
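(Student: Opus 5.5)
The plan is to round each Pauli coefficient of $H$ to the grid $\eta\mathbb Z\cap[-1,1]$ and then invoke \cref{eq:trnormforGibbs1} from \cref{lem:dSKLGibbsStates}.

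First, construct the rounded Hamiltonian. Write $H=\sum_{|P|\le k}h_PP$, which is traceless by convention. For each $P$ with $1\le|P|\le k$, choose $h'_P\in\eta\mathbb Z\cap[-1,1]$ minimizing $|h_P-h'_P|$, and set $h'_{I^{\otimes n}}=0$. Since $h_P\in[-1,1]$, both $\pm1$ and $0$ are available anchors, and consecutive grid points inside $[-1,1]$ are at distance $\eta$. The only possible wrinkle is the last gap near $\pm1$ when $1/\eta\notin\mathbb Z$. That gap has length less than $\eta$ and is bordered by $\pm1$ itself only if $\pm1\in\eta\mathbb Z$. In any case, the nearest grid point is within distance $\eta$, because $[-1,1]$ is covered by the points $\eta j$ with $|j|\le\lfloor 1/\eta\rfloor$ together with their $\eta$-neighbourhoods, and $\eta\le 1$ can be assumed. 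Here $\eta<1$ holds whenever $\eps<200\beta n^k$; otherwise the claim is trivial, since trace distance is at most $2$ and one can simply take the grid point $0$ when $\eps\ge 2$. This gives $H'=\sum_{|P|\le k}h'_PP\in\mathcal H_{\eps,k,n,\beta}$ with $\sup_{|P|\le k}|h_P-h'_P|\le\eta$.

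Second, apply the bound. Set $\rho=e^{-\beta H'}/\Tr[e^{-\beta H'}]\in\mathcal S_{\eps,k,n,\beta}$. Both $H$ and $H'$ are $k$-local, so \cref{eq:trnormforGibbs1} yields
\[
\norm{\rho(\beta)-\rho}_{\tr}\le 200\beta n^k\sup_{|P|\le k}|h_P-h'_P|\le 200\beta n^k\eta=\eps,
\]
which is the claim.

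The main obstacle is the edge effect in the rounding near $\pm1$. It is handled by noting that rounding to the nearest grid point within the interval costs at most $\eta$, and in fact at most $\eta/2$ except possibly at the ends. No further difficulty is expected, since the substantive work was already done in \cref{lem:dSKLGibbsStates}.
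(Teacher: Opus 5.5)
Your proof is correct and is essentially the paper's argument: round each $h_P$ to the nearest point of $\eta\mathbb Z\cap[-1,1]$ (error at most $\eta$), then apply \cref{eq:trnormforGibbs1} to get $\norm{\rho(\beta)-\rho}_{\tr}\le 200\beta n^k\eta=\eps$; you correctly write ``$\le$'' where the paper's display has ``$=$''. The only blemish is your aside on $\eta\le 1$, which is unnecessary because for $\eta>1$ the grid is $\{0\}$ and $|h_P-0|\le 1<\eta$ anyway, whereas your fallback (``trivial when $\eps\ge 2$'') would not by itself cover the case $200\beta n^k<\eps<2$.
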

\begin{proof}
    Given $P\in\{I,X,Y,Z\}^{\otimes n}$, let $h_P'$ be the element of $\eta\mathbb Z\cap[-1,1]$ that is closest to $h_P$. Let $H'=\sum h_P'P.$ Then, $\rho=e^{-\beta H'}/\Tr[e^{-\beta H'}]$ belongs to $\mathcal S_{\eps,k,n,\beta}.$ Also, by \cref{lem:dSKLGibbsStates} we have that 
    \begin{align*}
    \norm{\rho(\beta)-\rho}_{\tr}=200\beta n^k\max_{|P|\leq k}|h_P-h_P'|\leq 200\beta n^k\eta = \eps\, ,
    \end{align*}
    where we used the choice of $\eta$ in the last step.
\end{proof}

\noindent Next, we introduce some observables whose expected value will allow us to determine which element of the net is closest to the unknown state. Note that $$|\mathcal H_{\eps,k,n,\beta}|=|\mathcal S_{\eps,k,n,\beta}|=(2/\eta)^{O(n^k)}=(n^{k}\beta/\eps)^{O(n^k)},$$ so we can index the elements of both sets with elements of $[(n^{k}\beta/\eps)^{O(n^k)}]$. For any two indices $i,j\in [(n^{k}\beta/\eps)^{O(n^k)}],$ we define the observable $\Delta H_{i,j}=H_i-H_j.$ First, we bound the number of copies needed to estimate the expected values of all the observables $\Delta H_{i,j}$ in an unknown state $\rho$.

\begin{lemma}\label{lem:observablesfornetlearning}
    Let $\rho$ be an $n$-qubit state, and let $\eps',\tilde \eps,\delta>0.$ Then, with $O(3^kn^{2k}k\log(n/\delta)/\tilde \eps^2)$ single copies of $\rho$ one can obtain estimates $ \Delta H_{i,j,\rho}'$ such that, with probability $\geq 1-\delta,$
    $$|\Delta H_{i,j,\rho}'-\Tr[\rho\Delta H_{i,j}]|\leq \tilde\eps$$ holds simultaneously for every pair of Hamiltonians $H_i,H_j$ belonging to $\mathcal{H}_{ \eps',n,k,\beta}$. The classical post-processing time is $(n^{k}\beta/\eps')^{O(n^k)}/\widetilde\eps^2.$
\end{lemma}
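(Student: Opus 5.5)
The plan is to reduce everything to the classical shadows guarantee of \cref{theo:Shadows}. Each observable $\Delta H_{i,j}=H_i-H_j$ is a $k$-local traceless Hamiltonian with Pauli expansion $\sum_{|P|\le k}(h^{(i)}_P-h^{(j)}_P)P$, where every coefficient satisfies $|h^{(i)}_P-h^{(j)}_P|\le 2$. By linearity, and writing $\rho=\sum_P\rho_P P$ so that $\Tr[\rho P]=2^n\rho_P$,
\begin{equation*}
\Tr[\rho\,\Delta H_{i,j}]=\sum_{|P|\le k}(h^{(i)}_P-h^{(j)}_P)\,2^n\rho_P .
\end{equation*}
So it suffices to estimate the $k$-local Pauli expectations $2^n\rho_P$ once, and then plug those estimates into this linear formula for every pair $(i,j)$. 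The point is that the failure probability does not grow with the (doubly exponential-size) number of pairs, because only the $O(n^k)$ quantities $2^n\rho_P$ are estimated randomly.

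Concretely, I will run \cref{theo:Shadows} with accuracy $\eps_0$ and failure probability $\delta$. This gives estimates $\widetilde\rho_P$ with $2^n|\rho_P-\widetilde\rho_P|\le\eps_0$ for all $|P|\le k$ simultaneously, with probability $\ge 1-\delta$. I then define
\begin{equation*}
\Delta H'_{i,j,\rho}:=\sum_{|P|\le k}(h^{(i)}_P-h^{(j)}_P)\,2^n\widetilde\rho_P .
\end{equation*}
On the success event, the triangle inequality and \cref{eq:numberofklocalPaulis} bound the error by $2\cdot 100n^k\eps_0$. Choosing $\eps_0=\tilde\eps/(200n^k)$ makes this at most $\tilde\eps$ for all pairs at once. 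Substituting $\eps_0$ into the shadow bound gives $O(3^kk\log(n/\delta)\cdot n^{2k}/\tilde\eps^2)$ single copies, which is the claimed sample complexity. Note that $\eps'$ and $\beta$ enter only through the index set and not through the sample count.

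For the post-processing time, the shadow step costs $O((3n)^kk\log(n/\delta)/\eps_0^2)=O(3^kn^{3k}k\log(n/\delta)/\tilde\eps^2)$. Evaluating each $\Delta H'_{i,j,\rho}$ takes $O(n^k)$ arithmetic operations. There are $|\mathcal H_{\eps',k,n,\beta}|^2=(n^k\beta/\eps')^{O(n^k)}$ pairs, and the polynomial factors are absorbed into the $O(n^k)$ in the exponent. The total is therefore $(n^k\beta/\eps')^{O(n^k)}/\tilde\eps^2$, as stated.

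I do not expect a real obstacle. The only point needing care is making sure the union bound is applied to the $100n^k$ Pauli estimates, which the shadow theorem already handles through its $\log(n/\delta)$ factor, rather than to the exponentially many pairs. The $\ell_1$-type error propagation then costs exactly the extra $n^{2k}$ factor.
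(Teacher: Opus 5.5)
Your proposal is correct and follows the paper's proof essentially step for step: classical shadows at accuracy $\tilde\eps/(200n^k)$, the plug-in linear estimator $\sum_{|P|\le k}(h^{(i)}_P-h^{(j)}_P)\,2^n\widetilde\rho_P$, error propagation using $|h^{(i)}_P-h^{(j)}_P|\le 2$ together with the $100n^k$ count, and the same post-processing accounting over $|\mathcal H_{\eps',k,n,\beta}|^2$ pairs. Strictly speaking, the $\log(1/\delta)$ factor from the shadow step cannot be absorbed into $(n^k\beta/\eps')^{O(n^k)}$, but the paper's stated post-processing bound glosses over this in exactly the same way.
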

\begin{proof}
    By the classical shadow estimation protocol of \cref{theo:Shadows}, with $O(3^kn^{2k}k\log(n/\delta)/\tilde \eps^2)$ many copies of $\rho$ one can obtain estimates $\rho_P'$ such that, with probability $\geq 1-\delta$, satisfy $$|2^n\rho_P'-2^n\rho_P|= \frac{\widetilde \eps}{200n^k}$$ for every $|P|\leq k.$ We define $\Delta H_{i,j,\rho}'=\sum_{|P|\leq k}((h_i)_P-(h_j)_P)2^n\rho_P'.$ Then, by Plancherel's identity and \cref{eq:numberofklocalPaulis}, we get
    \begin{align*}
        |\Delta H_{i,j,\rho}'-\Tr[\Delta H_{i,j}\rho]|\leq \sum_{|P|\leq k}|(h_i)_P-(h_j)_P|\cdot 2^n|\rho_P-\rho_P'|\leq 200n^k \max_P|2^n\rho_P'-2^n\rho_P|=\tilde\eps.
    \end{align*}
    The classical post-processing time bound to obtain the estimates $\rho'_P$ is $O(3^kn^{3k}k\log(n)/\tilde \eps^2)$, coming from \cref{theo:Shadows}. Once we have the estimates $\rho'_P,$ by \cref{eq:numberofklocalPaulis}, it takes $O(n^k)$ time to compute each $\Delta H'_{i,j}$. Hence, the total post-processing time is $$O((3^kn^{3k}k\log(n)/\tilde \eps^2)+n^k|\mathcal H_{\eps,k,n,\beta}|^2)=O((3^kn^{3k}k\log(n)/\tilde \eps^2))+n^k(n^{k}\beta/\eps')^{O(n^k)}=(n^{k}\beta/\eps')^{O(n^k)}/\widetilde\eps^2 \, .$$ 
\end{proof}

Now, we are ready to present our Gibbs state learning protocol.

\begin{algorithm}[H]
\caption{Gibbs state learning}\label{alg:quantum_data3}

\begin{algorithmic}[1]
\Require $\delta,\eps\in (0,1)$; $O(3^kn^{2k}k\log(n)(\max\{\beta,1\})^2/ \eps^4)$ single copies of $\rho$. Set $\eps'=\frac{\eps^2}{100\max\{\beta,1\}n^k}$.
\State Obtain $(\eps^2/(\max\{\beta,1\}))$-estimates $\Delta H'_{i,j,\rho}$ of $\Tr(\Delta H_{i,j}\rho)$ with probability $\ge 1-\delta$, for pairs $H_{i}$, $H_j$ belonging to $\mathcal H_{\eps',n,k,\beta}$ via the protocol of \cref{lem:observablesfornetlearning}. 
\State Output $\rho'\in \mathcal{S}_{\eps,n.k,\beta}$, where $$\rho'=\text{argmin}_{\tau\in\mathcal{S}_{\eps',n,k,\beta}}\{\max_{i,j}\{|\Delta H_{i,j,\rho}'-\Tr[\Delta H_{i,j}\tau]|\}\}.$$
\end{algorithmic}
\end{algorithm}

\begin{theorem}\label{theo:GibbsStateLearning}
    Let $\rho$ be the Gibbs state at inverse temperature $\beta$ of an $n$-qubit and $k$-local Hamiltonian $H$ with $|h_P|\leq 1$ for every $P.$ Let $\delta,\eps\in (0,1)$. Then, from $O(3^kn^{2k}k\log(n/\delta)(\max\{\beta,1\})^2/ \eps^4)$ single copies of $\rho$, \Cref{alg:quantum_data3} obtains $\rho'\in\mathcal S_{\eps',n,k,\beta}$ such that $\norm{\rho'-\rho}_{\tr}\leq \eps$ with probability $\ge 1-\delta$. The classical post-processing time of the protocol is $(n^{k}\max\{\beta,1\}/\eps)^{O(n^k)}.$
\end{theorem}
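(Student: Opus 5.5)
The plan is the standard hypothesis-selection (Yatracos/minimum-distance) argument, with \cref{lem:dSKLGibbsStates} playing the role that total variation plays classically. Write $\beta_+=\max\{\beta,1\}$ and $\tilde\eps=\eps^2/\beta_+$ (up to a constant to be tuned). Three ingredients are needed.

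\textbf{Step 1 (net element close to $\rho$).} By \cref{lem:net} applied with accuracy $\eps'$, there is some $H_{i^*}\in\mathcal H_{\eps',n,k,\beta}$ whose Gibbs state $\sigma^*=\rho_{H_{i^*}}(\beta)$ satisfies $\norm{\rho-\sigma^*}_{\tr}\le \eps'$. Since every $\Delta H_{i,j}$ is a difference of Hamiltonians with coefficients in $[-1,1]$, we have $\norm{\Delta H_{i,j}}_{\op}\le 200 n^k$. Therefore, for every pair $(i,j)$,
\[
|\Tr[(\rho-\sigma^*)\Delta H_{i,j}]|\le 200n^k\eps'\le 2\eps^2/\beta_+ .
\]

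\textbf{Step 2 (the minimizer is close on all test observables).} By \cref{lem:observablesfornetlearning}, with the stated number of copies, with probability $\ge 1-\delta$ all estimates satisfy $|\Delta H'_{i,j,\rho}-\Tr[\rho\Delta H_{i,j}]|\le\tilde\eps$. Condition on this event. The candidate $\sigma^*$ then achieves objective value at most $\tilde\eps+2\eps^2/\beta_+$. The minimizer $\rho'=\rho_{H_{a}}(\beta)$ has objective value no larger, so by the triangle inequality
\[
|\Tr[(\rho'-\rho)\Delta H_{i,j}]|\le 2\tilde\eps+2\eps^2/\beta_+ \quad\text{for all } i,j,
\]
and hence also
\[
|\Tr[(\rho'-\sigma^*)\Delta H_{i,j}]|\le 2\tilde\eps+4\eps^2/\beta_+ .
\]

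\textbf{Step 3 (convert to trace distance).} This is the key step: the test observable is chosen to be exactly the one appearing in \cref{eq:trnormforGibbs0}. Apply that inequality to the two net states $\rho'$ and $\sigma^*$, whose Hamiltonians are $H_a$ and $H_{i^*}$:
\[
\norm{\rho'-\sigma^*}_{\tr}\le\sqrt{2\beta\,\Tr[(\rho'-\sigma^*)(H_{i^*}-H_a)]}
\le \sqrt{2\beta\,(2\tilde\eps+4\eps^2/\beta_+)}=O(\eps).
\]
Here the observable $H_{i^*}-H_a=\Delta H_{i^*,a}$ is one of those controlled in Step 2. Finally, the triangle inequality gives
\[
\norm{\rho'-\rho}_{\tr}\le \norm{\rho'-\sigma^*}_{\tr}+\eps'=O(\eps).
\]
Rescaling $\eps$ (equivalently, the constants in $\tilde\eps$ and $\eps'$) by a constant factor gives the bound $\eps$.

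\textbf{Complexity.} Plugging $\tilde\eps=\Theta(\eps^2/\beta_+)$ into \cref{lem:observablesfornetlearning} yields $O(3^kn^{2k}k\log(n/\delta)\beta_+^2/\eps^4)$ copies. The post-processing time is dominated by enumerating the net together with all pairs $(i,j)$, and also computing $\Tr[\Delta H_{i,j}\tau]$ for each $\tau$, which takes $2^{O(n)}$ time per candidate. This gives $(n^k\beta_+/\eps)^{O(n^k)}$ overall, since $2^{O(n)}$ is absorbed into that bound.

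\textbf{Main obstacle.} The delicate point is Step 3. \cref{eq:trnormforGibbs0} is applied to $\rho'$ versus $\sigma^*$, not to $\rho'$ versus $\rho$, because $\rho$ itself need not lie in the net. This forces careful bookkeeping of the error $\eps'$ of \cref{lem:net}, multiplied by $\norm{\Delta H}_{\op}=O(n^k)$. That is why $\eps'$ is taken of order $\eps^2/(\beta_+ n^k)$ rather than $\eps$. One must also check that the quantity under the square root is nonnegative, which holds by the Pinsker derivation, so only its upper bound matters.
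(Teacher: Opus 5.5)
Your proposal is correct and is essentially the paper's argument: take a net element $\sigma^*$ within $\eps'$ of $\rho$, use the minimizer property and the triangle inequality to bound $\Tr[(\rho'-\sigma^*)\Delta H_{i^*,a}]$ by $O(\eps^2/\max\{\beta,1\})$, and convert this to trace distance via \cref{eq:trnormforGibbs0}, which gives $\sqrt{12}\,\eps$ up to rescaling. Your complexity accounting is slightly more careful than the paper's, since you also count the $2^{O(n)}$ cost of evaluating $\Tr[\Delta H_{i,j}\tau]$ for each candidate, and that cost is absorbed into the stated bound.
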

\begin{proof}

    We first show correctness, and then perform a complexity analysis. 

    \textbf{Correctness analysis.}  By \cref{lem:net}, there is $\rho''\in\mathcal{S}_{\eps',n,k,\beta}$ such that $\norm{\rho-\rho''}_{\tr}\leq  \frac{\eps^2}{100\max\{\beta,1\}n^k}\le \eps$. In particular, 
    \begin{align}
        \max_{i,j}\{|\Delta H_{i,j,\rho}'-\Tr[\Delta H_{i,j}\rho'']|\}&= \max_{i,j}\{|(\Delta H_{i,j,\rho}'-\Tr[\Delta H_{i,j}\rho])-\Tr[\Delta H_{i,j}(\rho''-\rho)]|\}\nonumber\\
        &\leq \frac{\eps^2}{\max\{\beta,1\}}+\max_{i,j}\norm{\Delta H_{i,j}}_{\op}\norm{\rho''-\rho}_{\tr}
        \\
        &\leq \frac{\eps^2}{\max\{\beta,1\}}+200n^k\cdot \frac{\eps^2}{100\max\{\beta,1\}n^k}\leq 3\frac{\eps^2}{\beta},\label{eq:auxilary}
    \end{align}
    where in the second line we have used the guarantees of \cref{theo:Shadows}, and in the third line we have used that $\norm{\Delta H_{i,j}}_{\op}\leq 200n^k$ because of \cref{eq:numberofklocalPaulis}. Thus, by definition of $\rho'$, we also have 
    \begin{equation}\label{eq:auxiliary2}
        \max_{i,j}\{|\Delta H_{i,j,\rho}'-\Tr[\Delta H_{i,j}\rho']|\}\leq 3\frac{\eps^2}{\beta}.
    \end{equation}
    
\noindent Now, we are ready to upper bound the trace distance between $\rho'$ and $\rho$. By the triangle inequality we have that 
    \begin{align*}
        \norm{\rho-\rho'}_{\tr}\leq \norm{\rho-\rho''}_{\tr}+\norm{\rho'-\rho''}_{\tr}\leq \eps+\norm{\rho'-\rho''}_{\tr}.
    \end{align*}
    By \cref{lem:dSKLGibbsStates}, Equation \eqref{eq:trnormforGibbs0}, we further have that 
    \begin{align*}
        \norm{\rho'-\rho''}_{\tr}&\leq \sqrt{2\beta \Tr[\Delta H_{l_1,l_0}(\rho'-\rho'')]},
    \end{align*}
    where $l_0$, resp. $l_1$, is the label of the Hamiltonian $H_{l_0}$, resp. $H_{l_1}$, corresponding to state $\rho'$, resp. $\rho''$. Next, we apply \cref{eq:auxilary} and \cref{eq:auxiliary2} and get 
    \begin{equation*}
        \norm{\rho-\rho'}_{\tr}\leq \eps+\sqrt{4\times 3\eps^2}\le 5\eps.
    \end{equation*}
The bound claimed in the statement of the theorem follows up to constant rescaling.
    
    \textbf{Complexity analysis.} The complexities follow from applying \cref{lem:observablesfornetlearning} with $$\eps'=\eps^2/(100\max\{\beta,1\}n^k)\qquad \text{ and } \qquad \tilde\eps=\eps^2/\max\{\beta,1\}.$$
\end{proof}

\subsection{Certifying Gibbs states}
In this section we propose a fully-efficient protocol to certify Gibbs states, i.e., an algorithm whose sample-complexity and time-complexity are both at most polynomial in all relevant parameters. 
\begin{theorem}\label{theo:certifyingGibbsStates}
    Let $\rho$ and $\rho_0$ be the Gibbs states at inverse temperature $\beta$ of $n$-qubit and $k$-local Hamiltonians $H$ and $H_0$ with $|h_P|,|(h_0)_{P}|\leq 1$ for every $P,$ respectively. Assume that $H_0$ is known. Let $\delta,\eps\in (0,1)$. Then, \Cref{alg:quantum_data4} decides, with success probability $\geq 1-\delta$, whether $\norm{\rho-\rho_0}_{\tr}\leq \eps^2/(400\beta n^k)$ or $\norm{\rho-\rho_0}_{\tr}\geq 2\eps$ with $$O
    \left(\frac{\beta^2n^{2k}3^kk\log(n/\delta)}{\eps^4}\right)$$  single copies of $\rho$ and $\rho_0$. Moreover, the protocol only requires Pauli measurements, and a classical post-processing time of order $O
    \left(\beta^2n^{3k}3^kk\log(n/\delta)/\eps^4\right)$. The same conclusion holds if $\rho$ and $\rho_0$ are both unknown and we are given copy access to both.\footnote{As for certain regime it happens that $\eps^2/(400\beta n^k)\geq  2\eps$, it may seem that the \emph{far} and \emph{close} can overlap, and thus that the testing task is not well-defined. However, this is not the case, because for that regime of parameters the \emph{far} hypothesis cannot occur. Indeed, by \cref{lem:dSKLGibbsStates} we have that $\norm{\rho(\beta)-\rho_0(\beta)}_{\tr}\leq \sqrt{2\beta \norm{H-H_0}_{\op}\norm{\rho(\beta)-\rho(\beta)}_{\tr}}$. Then, as $\norm{H-H_0}_{\tr}\leq 200 n^k$, because $|h_P|,|(h_0)_P|\leq 1$, we have that $\norm{\rho(\beta)-\rho_0(\beta)}_{\tr}\leq  400\beta n^k$. For the parameters such that $\eps^2/(400\beta n^k)\geq  2\eps$ we then have that $\norm{\rho(\beta)-\rho_0(\beta)}_{\tr}\leq \eps/2.$}
\end{theorem}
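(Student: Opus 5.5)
The plan is to estimate all $k$-local Pauli coefficients of $\rho$ (and, if necessary, of $\rho_0$) with classical shadows, and then use \cref{eq:trnormforGibbs2} together with a simple triangle inequality to separate the two cases. Concretely, \cref{alg:quantum_data4} runs the protocol of \cref{theo:Shadows} on $O(3^kk\log(n/\delta)/\tilde\eps^2)$ single copies of $\rho$ with accuracy $\tilde\eps=\eps^2/(800\beta n^k)$. This gives estimates $\widetilde\rho_P$ with $2^n|\rho_P-\widetilde\rho_P|\le\tilde\eps$ for all $|P|\le k$. Substituting $\tilde\eps$ yields exactly the claimed $O(\beta^2n^{2k}3^kk\log(n/\delta)/\eps^4)$ copies, and the post-processing cost $O(n^k)$ times the number of samples matches the claimed time.

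For $\rho_0$ there are two cases. If $H_0$ is known, the coefficients $(\rho_0)_P$ could in principle be computed, but this is not time-efficient. I would instead treat $\rho_0$ as sampled and run the same shadow protocol on copies of it. That handles the known and unknown cases simultaneously, which is why the statement counts copies of both $\rho$ and $\rho_0$. The test statistic is
\[
T=\max_{|P|\le k}2^n|\widetilde\rho_P-\widetilde{\rho_0}_P|.
\]
The algorithm outputs CLOSE if $T\le 3\tilde\eps$ and FAR otherwise.

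\emph{Close case.} For every $P$ we have $2^n|\rho_P-(\rho_0)_P|=|\Tr[P(\rho-\rho_0)]|\le\norm{P}_{\op}\norm{\rho-\rho_0}_{\tr}\le\eps^2/(400\beta n^k)=2\tilde\eps$. Adding the two estimation errors, each at most $\tilde\eps/2$ (choose the shadow accuracy to be $\tilde\eps/2$), gives $T\le 3\tilde\eps$, so the algorithm outputs CLOSE.

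\emph{Far case.} By \cref{eq:trnormforGibbs2}, $4\eps^2\le\norm{\rho-\rho_0}_{\tr}^2\le 400\beta n^k\max_P2^n|\rho_P-(\rho_0)_P|$. Hence the maximal true deviation is at least $\eps^2/(100\beta n^k)=8\tilde\eps$, and so $T\ge 8\tilde\eps-\tilde\eps>3\tilde\eps$, and the algorithm outputs FAR. A union bound over the two shadow runs, each with failure probability $\delta/2$, gives overall success probability at least $1-\delta$.

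The step I expect to need the most care is making the constants consistent, namely choosing the shadow accuracy and the threshold so that the gap between $2\tilde\eps$ (close case) and $8\tilde\eps$ (far case) survives the estimation error. A second point to check is the footnoted degenerate regime in which the close and far hypotheses would seem to overlap. As the footnote shows, the far case is impossible there, so the algorithm can output CLOSE safely. Beyond these, everything reduces to \cref{lem:dSKLGibbsStates} and \cref{theo:Shadows}.
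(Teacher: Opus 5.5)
Your proposal is correct and follows the paper's proof: you estimate all $k$-local Pauli coefficients of both $\rho$ and $\rho_0$ by classical shadows, bound the close case with $2^n|\rho_P-(\rho_0)_P|\le\norm{\rho-\rho_0}_{\tr}$, and handle the far case with \cref{eq:trnormforGibbs2}. The only difference is in the constants: the paper uses shadow accuracy $\tilde\eps$ and threshold $6\tilde\eps$, while you use accuracy $\tilde\eps/2$ and threshold $3\tilde\eps$; both choices leave the required gap and give the same complexities.
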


\noindent Since the situation where $\rho$ and $\rho_0$ are both unknown is strictly harder than the case of a known $\rho_0$, we only treat the former. 

\begin{algorithm}[H]
\caption{Gibbs state certification}\label{alg:quantum_data4}
\begin{algorithmic}[1]
\Require $O
    \left(\frac{\beta^2n^{2k}3^kk\log(n/\delta)}{\eps^4}\right)$ single copies of $\rho$, $\delta,\eps\in (0,1)$.
\State Obtain estimates $\rho'_P$ and $(\rho_0)'_P$ such that, with probability $\ge 1-\delta$ via the classical shadow tomography protocol of \cref{theo:Shadows}, such that 
\begin{equation}\label{eq:shadowssucceed}
        2^n|\rho_P-\rho_P'|,2^n|(\rho_0)_P-(\rho_0)_P'|\leq \frac{\eps^2}{800\beta n^k},
    \end{equation} 
    for every $|P|\le k$.
\If{there is $|P|\le k$ such that $2^n|\rho_P'-(\rho_0)_{P}'|\geq 3\eps^2/(400\beta n^k)$} \State output ``FAR''.
\Else \State output ``CLOSE''.
\EndIf
\end{algorithmic}
\end{algorithm}

\begin{proof}
    We first show correctness, and then perform a complexity analysis.

    \textbf{Correctness analysis.} Assume that \cref{eq:shadowssucceed} holds. If $\norm{\rho-\rho_0}_{\tr}\leq \eps^2/(400\beta n^k),$ then $$2^n|\rho_P'-(\rho_0)'_P|\leq 2^n|\rho_P'-\rho_P|+2^n|\rho_P-(\rho_0)_P|+2^n|(\rho_0)_P-(\rho_0)_P'|\leq 2\frac{\eps^2}{400\beta n^k},$$
    so we output ``CLOSE'', as desired. 

    On the other hand, assume that $\norm{\rho-\rho_0}_{\tr}\geq 2\eps.$ Then, by \cref{lem:dSKLGibbsStates}
    \begin{align*}
        4\eps^2\leq 400\beta n^k\max_{|P|\leq k}2^n|\rho_P-(\rho_0)_P|.
    \end{align*}
    Now, by \cref{eq:shadowssucceed} we have that 
    \begin{align*}
        3\eps^2\leq 400\beta n^k\max_{|P|\leq k}2^n|\rho_P'-(\rho_0)_P'|.
    \end{align*}
    Hence, there is $|P|\leq k$ such that $2^n|\rho_P'-(\rho_0)_P'|\geq 3\eps^2/(400\beta n^k)$, as desired.

    \textbf{Complexity analysis.} The complexity analysis follows from applying the classical shadow tomography protocol of \cref{theo:Shadows} with error parameter $\eps^2/(800\beta n^k).$
\end{proof}

\section*{Acknowledgments}
J.L. and M.S. thank Savar D. Sinha and Yu Tong for helpful discussions. A.B. was supported by the ANR project PraQPV, grant number ANR-24-CE47-3023. F.E.G. was supported by the European union’s Horizon 2020 research and innovation programme under the Marie Sk lodowska-Curie grant agreement no. 945045, and by the NWO Gravitation project NETWORKS under grant no. 024.002.003. C.R. is supported by France 2030 under the French National Research Agency award number ``ANR-22-PNCQ-0002''.

\bibliographystyle{alphaurl}
\bibliography{Bibliography}
\end{document}